\begin{document}


\title{{\Large Maximizing diversity over clustered data}\thanks{This
work was supported
by the Academy of Finland project 
``Active knowledge discovery in graphs (AGRA)'' (313927), 
the EC H2020 RIA project ``SoBigData++'' (871042), 
and the Wallenberg AI, Autonomous Systems and Software Program (WASP).}}
\author{Guangyi Zhang\thanks{Department of Computer Science, KTH Royal Institute
of Technology, Sweden. This work was done while the author was with Aalto University.}
\and Aristides Gionis\thanks{Department of Computer Science, KTH Royal Institute
of Technology, Sweden, and Department of Computer Science, Aalto University, Finland.}}

\date{}

\maketitle


\fancyfoot[R]{\scriptsize{Copyright \textcopyright\ 2020 by SIAM\\
Unauthorized reproduction of this article is prohibited}}





\begin{abstract} 
Maximum diversity aims at selecting a diverse set of high-quality objects from a collection, which is a fundamental problem and has a wide range of applications, e.g., in Web search.
Diversity under a uniform or partition matroid constraint naturally describes useful cardinality or budget requirements, and admits simple approximation algorithms~\cite{borodin2012max}.
When applied to clustered data, however, popular algorithms such as picking objects iteratively and performing local search lose their approximation guarantees towards maximum intra-cluster diversity because they fail to optimize the objective in a global manner.
We propose an algorithm that greedily adds a pair of objects instead of a singleton, 
and which attains a constant approximation factor over clustered data.
We further extend the algorithm to the case of monotone and submodular quality function, and under a partition matroid constraint.
We also devise a technique to make our algorithm scalable, and on the way we obtain a modification that gives better solutions in practice while maintaining the approximation guarantee in theory. 
Our algorithm achieves excellent performance, compared to strong baselines in a mix of synthetic and real-world datasets.%
\end{abstract}

\section{Introduction}\label{sec:intro}

The {\em dispersion problem} aims at selecting a diverse set of objects from a collection. 
The problem has been studied extensively for 
different notions of dispersion and different settings. 
In their seminal work, Ravi et al.~\cite{ravi1994heuristic} 
studied the problem of selecting a subset of objects in a metric space
so as to maximize some dispersion criterion,  
such as the minimum distance among all pairs of selected objects (\textsc{MaxMin} or {\em remote edge}), 
and the sum of all pairwise distances (\textsc{MaxSum} or {\em remote clique}), 
and provided algorithms with constant-factor approximation guarantees.
Stronger theoretical guarantees have been established later 
for the metric case~\cite{birnbaum2009improved,hassin1997approximation}, 
while a polynomial-time approximation scheme is possible 
for negative-type distances~\cite{manzano2016approximation}.

The \textit{diversification problem} is an extension of the dispersion problem
that aims to maximize not only the pairwise distances of the selected objects, 
but also quality (or relevance in information retrieval) over those selected objects,
via a weighted sum of both.
The problem has important applications in 
databases, operations research, and information retrieval.
Gollapudi and Sharma \cite{gollapudi2009axiomatic} showed 
that if the quality function is modular, 
the diversification problem can be re-casted as a dispersion problem. 
Borodin et al.~\cite{borodin2012max} further extend the diversification problem 
to permit a submodular quality function, which is common in real-life applications. 
Furthermore, they generalized the diversification setting to allow for an arbitrary matroid constraint. 
The algorithms proposed by Borodin et al.\ rely on 
a greedy strategy that selects one object at a time, 
as well as a local-search strategy. 

The aforementioned works, however, assume that the objects to be selected come from a single collection.
On the other hand, in many cases, the collection of objects is split 
into a number of potentially-overlapping clusters.
Consider the following example on analyzing communities of different political leaning in the Twitter network.
Users from different communities discussing a topic may use the same hashtags, thus, 
inducing an overlapping clustering on the set of hashtags.
We want to design a \emph{description} functionality for each community,
which creates descriptors by selecting a subset of hashtags used by users in each community.
We are naturally interested in selecting a high-quality and diverse subset of hashtags, 
and at the same time, increasing exposure of different hashtags as much as possible, 
i.e., a hashtag will not be repeatedly selected.

In this paper we focus on the problem of maximizing diversity over data that form overlapping clusters, 
which models the setting discussed in the previous example.
In more detail, 
we maximize the sum of dispersion within each selected set.
The setting can be seen as applying {\sc MaxSum} diversification 
to each cluster separately under a global partition matroid constraint.
Similar to Borodin et al.,
we assume that distances between our objects are measured by a metric, 
while the quality function is monotone non-decreasing and submodular.
To the best of our knowledge, Abbassi et al.~\cite{abbassi2013diversity} are the only work that covers diversity maximization over clustered data, but they focus on global dispersion, 
which is a sum of intra- and inter-cluster dispersion and may cause undesirable consequences in many applications.
For example, in the above example, the presence of inter-cluster dispersion prevents one community from selecting hashtags similar to the one taken by the other community, and leads to incomplete summarization of communities.

Our first observation is that in the presence of overlapping clusters, 
the approximation algorithms of Borodin et al.\ 
(greedy and local search)
have unbounded approximation ratios.
We then propose algorithms with provable approximation guarantees.
Our key finding is that it is beneficial to greedily 
add \emph{pairs of objects} in the solution, instead of singleton objects.
This idea is inspired by the work of Hassin et al.~\cite{hassin1997approximation}.

Additionally, we propose a modification of our algo\-rithm, 
which improves scalability, 
and also enables additional flexibility. 
The modified algorithm allows to obtain better solutions in practice
while maintaining the approximation guarantee in theory.

We summarize our contributions as follows.
\squishlist
\item
To the best of our knowledge, 
this is the first work to study the 
problem of intra-cluster diversity maximization in the presence of overlapping clusters. 
\item
We propose a greedy algorithm
with constant-factor approximation guarantee, 
while other popular heuristics are shown to have unbounded approximation ratio.
We further extend the analysis to the case of monotone and submodular quality function, 
and under a partition matroid constraint.
\item 
We show how to make our algorithm scalable, 
and on the way we obtain a modification that
gives better solutions in practice 
while maintaining the approximation guarantee in theory. 
\item Our algorithm achieves excellent performance, 
compared to strong baselines in a mix of synthetic and real-world datasets. 
\squishend

The rest of the paper is organized as follows.
We first discuss related work in Section~\ref{sec:related}.
Then a formal definition for the problem is formulated in Section~\ref{sec:def}, 
followed by theoretical analysis in Sections \ref{sec:analysis} and~\ref{sec:subm}, 
with and without a quality function, respectively.
A modifi\-cation is introduced in Section~\ref{sec:scalability} to cope with large-scale data.
We evaluate our algorithms on both synthetic and real-world datasets in Section~\ref{sec:experiment}.
Finally, we conclude in Section~\ref{sec:conclusion}.

For lack of space, 
the proofs of most of our technical results, 
together with the case of partition matroid constraint, 
and some additional experiments are given in the Appendix.

\section{Related work}\label{sec:related}
Maximum dispersion was first studied by Ravi et al.~\cite{ravi1994heuristic}.
Gollapudi and Sharma \cite{gollapudi2009axiomatic} incorporate a quality objective into the dispersion framework.
Borodin et al.~\cite{borodin2012max} extend the dispersion problem to submodular quality functions.
Abbassi et al.~\cite{abbassi2013diversity} study global dispersion under a partition matroid constraint over clustered data, whereas we study a sum of intra-cluster dispersion.

Most of the existing work on maximum dispersion, including the approaches mentioned above, 
adopts a greedy strategy of 
adding iteratively a single element or local-search strategy.
One exception is the work of Cevallos et al.~\cite{manzano2016approximation}, 
who apply convex quadratic programming with a negative-type distance function.
Bhaskara et al.~\cite{bhaskara2016linear} solve a similar \textsc{Sum-Min} problem using an 
Linear Programming relaxation.
Our algorithm is inspired by the work of Hassin et al.~\cite{hassin1997approximation}, 
who utilize maximum matching for multi-subset selections
and a more efficient greedy algorithm on disjoint edges for the case of a single subset.

In web search, 
the importance of providing users a list of documents that are not only relevant to their queries
but also diverse has been noticed early on.
Carbonell et al.~\cite{carbonell1998use} propose a greedy algorithm with respect to {\em Maximal Marginal Relevance} 
(MMR) to reduce the redundancy among returned documents.
Zhai et al.~\cite{zhai2015beyond} invent two novel distances for documents, and evaluate diversity using coverage on subtopics.
Agrawal et al.~\cite{agrawal2009diversifying} incorporate an existing taxonomy of topics and develop a probabilistic model whose goal is to cover every intention of a query, where the probabilities of documents covering an intention are considered independent, determined by their relevance to the query and their topic distributions.

Zhang et al.~\cite{zhang2005improving} form an affinity graph of documents, where the most informative document given by PageRank is selected at each iteration, and then its neighbors are immediately removed to avoid redundancy.
Radlinski et al.~\cite{radlinski2008learning} assume that users with different interpretations of the same query click diverse documents, leading to the idea of optimizing a ranking of documents using only user clickthrough data such that the probability of users not clicking any document on the list is minimized.
Compared to these approaches, diversification provides a solid theoretical framework, 
which can be applied to many different settings, 
once an appropriate object representation and distance metric are defined.

Other related work includes the team-formation problem \cite{lappas2009finding}, 
where it is asked to select team members who bear a small communication cost, 
formulated as a Steiner tree or diameter.
Esfandiari et al.~\cite{esfandiari2019optimizing} have a different focus, 
partitioning students into different groups to encourage peer learning, 
so as to optimize some special affinity structure such as a star in each~group.

The scalability of dispersion algorithms becomes an essential concern as the data size increases.
Distributed algorithms have been invented to tackle the computational complexity for dispersion problems \cite{zadeh2017scalable}.
However, little work focuses on the scalability of dispersion algorithms in a clustering setting.

\section{Notation and problem definition}\label{sec:def}

\note[Guangyi]{Lets sort out lower-case indexing letters we are using so far. 
$j$ to index clusters.
$i$ is later used to index the $i$-th pair taken.
Other lower-case letters are: $n,m,k$ for size, $b$ for cardinality ($\pr{b}$ for a rounded-up even number), $p$ for pair, $d$ for distance, and $u,v,w,x,y,l$ for elements.
}
\note[Aris]{perfect!}

In this section we formally define the {\em intra-cluster dispersion} problem, 
which is the focus of our paper. 

We start with a {\em ground set} $\univset =\{\elem_1,\ldots,\elem_{\nunivset}\}$,
and a {\em distance metric} $\dist:\univset\times\univset\rightarrow\mathbb{R}_{+}$ defined over \univset.
In some cases a {\em quality function} $\quality:2^\univset\rightarrow\mathbb{R}$ is available, 
which evaluates the quality of subsets of \univset according to some desirable property.

In addition we assume that we are given as input a 
clustering $\clustering=\{\cls_1,\ldots,\cls_{\ncls}\}$ over the ground set $\univset$, 
which may contain overlapping clusters. 
Finally, we are given cardinality upper bounds $\nsel_1,\ldots,\nsel_{\ncls}$, 
one for each cluster of \clustering, 
i.e., $\nsel_j$ is a cardinality upper bound for cluster $\cls_j$, for $j=1,\ldots,\ncls$.

The objective of the intra-cluster dispersion problem
is to pick pairwise disjoint sets $\sel=\{\sel_1,\ldots,\sel_{\ncls}\}$, 
with $\sel_j\subseteq\cls_j$ and $|\sel_j|\le\nsel_j$, for $j=1,\ldots,\ncls$,
so as to maximize the dispersion function
\begin{equation}
\label{equation:intra-dispersion}
\distintra(\sel) = \sum_{j=1}^{\ncls} \dist(\sel_j),
\end{equation}
\note{Every pair counted twice in $\dist(\sel_j)$.}
where $\dist(\sel_j) = \sum_{(u,v): u,v\in \sel_j} \dist(u,v)$
is the sum of all pairwise distances of the elements $\sel_j$ picked from cluster~$\cls_j$.

When a quality function $\quality$ is available, 
it can be incorporated in the intra-cluster dispersion problem
by asking to maximize the objective function
\begin{equation}
\label{equation:quality-intra-dispersion}
\distintrasubm(\sel) = \quality(\sel) + \lambdasubm\,\distintra(\sel),
\end{equation}
where $\lambdasubm$ is a user-defined parameter that is used to provide a trade-off between quality and dispersion.
Equation~(\ref{equation:quality-intra-dispersion}) is often used to formulate the {\em diversification problems}~\cite{gollapudi2009axiomatic},
where we typically want to select high-quality sets (as measured by \quality)
which are also diverse (as measured by \distintra).

We refer to the problems of maximizing Equations~(\ref{equation:intra-dispersion}) 
and~(\ref{equation:quality-intra-dispersion})
by \icd and \icdq, respectively.
In the case of a single cluster, we refer to them as \scd and \scdq, respectively.

An interesting generalization of the intra-cluster dispersion problem
is when in addition to the clustering $\clustering=\{\cls_1,\ldots,\cls_{\ncls}\}$ 
we are also given a partition  
$\partitionset = \{\partition_1,\ldots,\partition_{\npartition}\}$, 
the goal remains unchanged with an additional requirement of selecting at most one element from each set in the partition~\partitionset. 
Without loss of generality, we can assume that each element forms its own partition, 
resulting in the constraint of the sets in \sel being disjoint. 
However, our methods generalize to a general partition matroid constraint 
by removing a whole set of the partition when we select a single element from it.
A detailed elaboration can be found in Section \sref{subsec:partition}. 

Finally, recall that a set function 
$\quality:2^\univset\rightarrow\mathbb{R}$
is {\em monotone non-decreasing}
if $\quality(B)\ge\quality(A)$, for all $A\subseteq B\subseteq\univset$.
The function \quality is {\em sub\-modular} if it sat\-i\-sfies the {\em ``diminishing returns''} property, 
that is, for all $A,B\subseteq\univset$ with $A\subseteq B$ and for all $\elem\in\univset\setminus B$, 
it holds that
$\quality(A\cup\{\elem\}) - \quality(A) \ge \quality(B\cup\{\elem\}) - \quality(B)$.
A set function \quality is {\em super\-modular} if $-\quality$ is sub\-modular, 
and it is {\em modular} if it is both sub\-modular and super\-modular. 

\spara{Complexity.}
The \icd and \icdq problems studied in this paper
extend the {\em max-sum diversification} problem~\cite{borodin2012max}, 
which aims at maximizing the objective (\ref{equation:quality-intra-dispersion})
in a single cluster. 
The max-sum diversification is proven to be \np-hard~\cite{hansen1995dispersing}.
As a consequence, the problems \icd and \icdq are \np-hard, 
as they generalize the max-sum diversification problem in the case that clusters are available.
Thus, our objective is to develop approximation algorithms with provable guarantees. 

\section{Maximizing dispersion}\label{sec:analysis}

We start our discussion about limitations of existing algorithms for the \icd and \icdq problems 
(i.e.,\ maximizing Equation~(\ref{equation:intra-dispersion}) and Equation~(\ref{equation:quality-intra-dispersion}), respectively), 
and proceed to the proposed algorithm for the \icd problem.
The case of \icdq is discussed in Section~\ref{sec:subm}.

For the \scdq problem, 
which is a special case of the \icdq problem when no clustering is available, 
the {\em greedy algorithm} is a natural choice. 
The greedy adds elements into \sel iteratively, 
each time selecting the element that yields the maximal gain in a slightly-modified 
version of the objective function (\ref{equation:quality-intra-dispersion}).
Borodin et al.~\cite{borodin2012max} showed that if the quality function \quality is submodular, 
the greedy algorithm achieves a factor-2 approximation guarantee.

A natural extension of the greedy algorithm for the \icdq problem is the following: 
iterate over the clusters $\cls_1,\ldots,\cls_{\ncls}$, 
and for each cluster $\cls_j$, 
form the set $\sel_j$ by iteratively adding $\nsel_j$ elements into $\sel_j$ from $\cls_j$; 
each time add the element that yields the maximal gain for the objective function (\ref{equation:quality-intra-dispersion}).
We refer to this simple algorithm as \gelms.

Another powerful technique in combinatorial optimization is {\em local search}:
start by a feasible solution and perform incremental changes until the objective cannot be improved. 
For the \icdq problem, 
an intuitive way to apply the local-search strategy
is by {\em element swapping}, 
which is finding a cluster
for which it is possible to swap an element in the current solution and improve the objective.
We call this algorithm \lsi.
When the objective is replaced with global dispersion, we call it \lsg.

Our first result is to show that for the \icd and \icdq problems
the algorithms \gelms and \lsi do not have a bounded approximation factor. 
This is in contrast to the approximation guarantee of \gelms
for the \scd problem.

\begin{lem}
\label{lemma:unbounded}
The approximation factor of algorithms \gelms and \lsi is unbounded.
\end{lem}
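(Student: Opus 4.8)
The plan is to exhibit, for each of the two algorithms, a single parametric family of instances on which the ratio between the optimum and the algorithm's output grows without bound. Because both \gelms and \lsi optimize only locally (one cluster at a time, or one element swap at a time), the natural strategy is to build an instance with overlapping clusters in which a locally attractive element is ``stolen'' by the cluster that is processed first (or by a swap that looks improving), thereby destroying the global structure that the optimum exploits.

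For \gelms I would proceed as follows. First I would construct two clusters $\cls_1$ and $\cls_2$ sharing a special element $w$, together with a few private elements in each cluster, and a metric chosen so that $w$ is ``far'' from the private elements of $\cls_1$ but the private elements of $\cls_2$ are ``far'' from each other. Set the budgets so that $\nsel_1=\nsel_2=2$. The key step is to verify that when the greedy processes $\cls_1$ first, its myopic rule picks $w$ together with one private element of $\cls_1$ (because that pair maximizes the within-$\cls_1$ distance at that moment), whereas the optimum would leave $w$ out of $\cls_1$ entirely and instead place $w$ in $\cls_2$ paired with the one element of $\cls_2$ that is close to all the others — no, more carefully: the optimum assigns $w$ to whichever cluster benefits most globally while $\cls_1$ falls back on a cheaper private pair. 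By tuning the distances (e.g.\ letting the ``good'' distances be $M$ and the forced-fallback distances be $1$, with $M\to\infty$ subject to the triangle inequality, which is easily arranged by embedding points on a line or in a star metric), the greedy solution has value $O(1)$ (or a value bounded independent of $M$) while the optimum has value $\Theta(M)$, giving an unbounded ratio. I would then remark that the same construction shows \gelms fails for \icd already (no quality function needed), and a fortiori for \icdq by taking $\quality\equiv 0$.

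For \lsi the argument is similar in spirit but I must additionally check that the bad configuration is a \emph{local optimum} under single-element swaps. Here I would start the local search from the feasible solution in which $\cls_1$ has grabbed the shared element $w$, and design the private elements so that (i) no swap within $\cls_1$ helps — removing $w$ and inserting a private element only decreases $\dist(\sel_1)$, since $w$ is the farthest point from $\cls_1$'s current partner; and (ii) no swap within $\cls_2$ helps, because $\cls_2$'s private elements are mutually far and already optimal among the elements \emph{available} to it (recall $w$ is taken, so $\cls_2$ cannot even consider it). The disjointness constraint is exactly what makes this a genuine local optimum: a profitable move would require simultaneously freeing $w$ from $\cls_1$ and acquiring it in $\cls_2$, which is a two-element change and hence invisible to single-swap local search. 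Again with the distances scaled by $M$, the local optimum has bounded value while the global optimum is $\Theta(M)$.

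I expect the main obstacle to be purely bookkeeping: choosing the handful of distances so that (a) they form a valid metric (triangle inequality among all the named points), (b) the greedy's tie-breaking/argmax genuinely selects the ``wrong'' pair at the first step, and (c) the target configuration is certifiably a local optimum for \lsi against \emph{every} single-element swap in \emph{every} cluster. All three are routine once one fixes a concrete small point set — a line metric or a weighted-star metric with two or three ``hub'' distances suffices — so the proof reduces to presenting that point set, computing the two objective values as functions of the scaling parameter $M$, and letting $M\to\infty$. I would present one clean instance that works for both algorithms simultaneously if possible, and otherwise two closely related instances, keeping $\ncls=2$ and all budgets equal to $2$ for maximal simplicity.
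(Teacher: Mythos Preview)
Your proposal is correct and follows the same approach as the paper: exhibit a parametric family of instances with overlapping clusters in which the myopic algorithm (by cluster order for \gelms, or by single swaps for \lsi) commits a shared element to the ``wrong'' cluster, yielding bounded value while the optimum scales with the parameter. The paper does this with a four-cluster picture; your two-cluster construction with a single shared element $w$ and a scaling distance $M$ is the same mechanism in a slightly more compact form, and the local-optimum check you outline for \lsi (freeing $w$ requires a coordinated two-element move) is exactly the point.
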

\begin{proof}
\begin{figure}[t] 
\centering
\includegraphics[width=.4\textwidth]{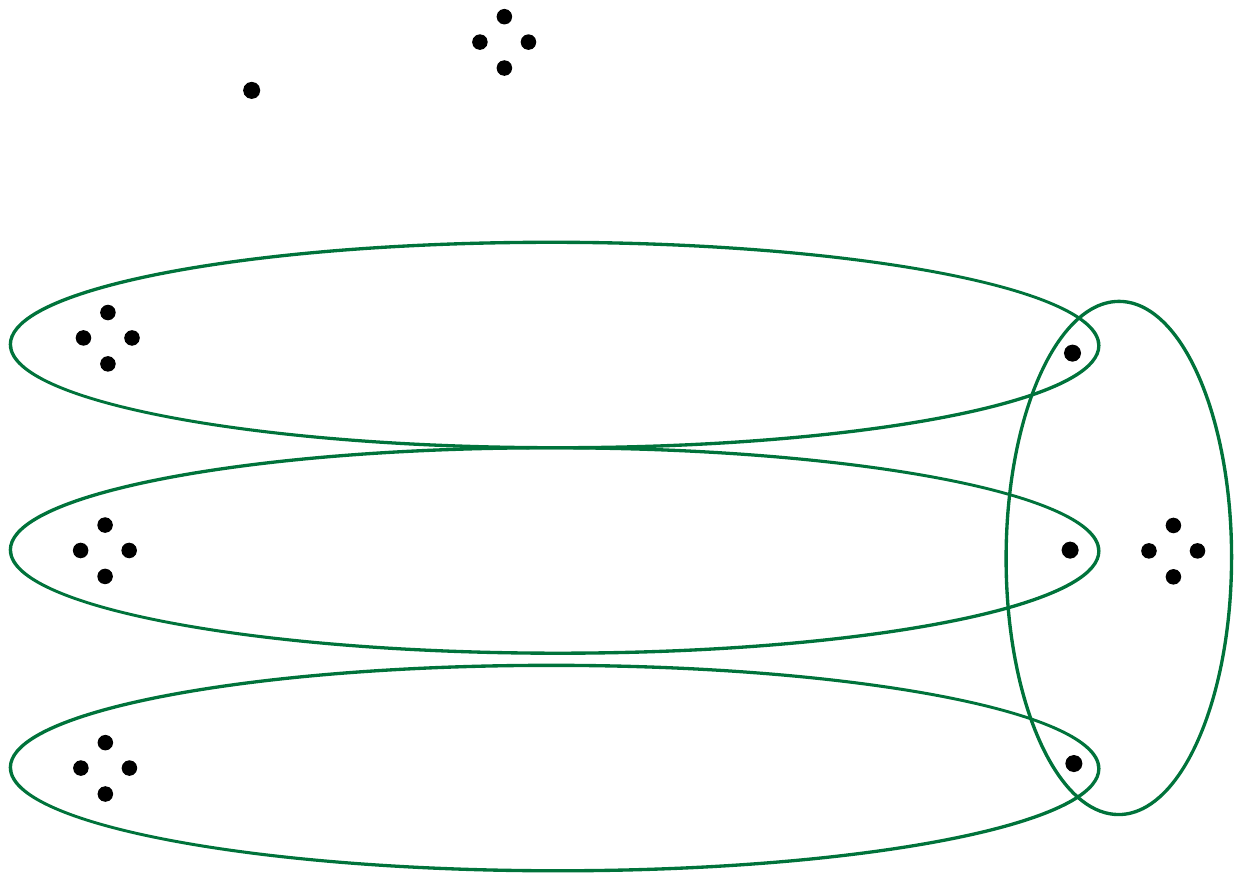} 
\caption{A bad instance for algorithms \gelms and \lsi 
for the intra-cluster dispersion problems (\icd and \icdq).}
\label{fig:intra_bad_case}
\end{figure}
We prove this argument by the example shown in Figure \ref{fig:intra_bad_case}, 
which shows 4 overlapping clusters. 
The horizontal distance of the three clusters on the left can be arbitrarily large. 
If \gelms considers the cluster on the right before any of the clusters of the left, 
the approximation ratio can be arbitrarily large.
For the \lsi algorithm, 
a selection containing any overlapping element can be a feasible initial selection for the right cluster, 
leading again to an arbitrarily large approximation ratio.
\hfill\end{proof}
From the proof of Lemma~\ref{lemma:unbounded}, it is worth noting
that the performance of the two approaches, \gelms and \lsi, 
depends on the order of processing the data clusters, 
and the initial solution, respectively.
Notice that the dependence of \gelms on the order of processing the clusters is unavoidable, 
as the first element of each cluster that gets added to the solution 
contributes nothing to the objective function.

\spara{Proposed method.}
We now introduce our proposed method for the \icd problem.
The main idea is to overcome the drawback of \gelms
and its dependence on cluster ordering. 
We can achieve this by finding a {\em pair of elements}, 
which both belong to the same cluster, 
and add them both to the solution. 
After adding a pair of elements to a set $\sel_j$, 
the corresponding budget $\nsel_j$ is reduced by 2. 
A desirable feature of our algorithm, 
is that at each step we can select the pair of elements
with the maximal contribution to some intuitive measure;
the order of processing clusters is now determined completely 
by the algorithm, and it is not arbitrary. 
\note[Guangyi]{the pair of elements with the maximal contribution to the objective is discussed in \cite{birnbaum2009improved}, where they extend single-element greedy to multiple-element greedy.}
Pseudocode of our algorithm, which we call \gp,
is shown in Algorithm~\ref{alg:intra}.

\noindent
\rule{\columnwidth}{0.001mm}

\vspace{-3mm}
\begin{alg}{\label{alg:intra}\gp}\\
\textbf{Input:}
Ground set $\univset$, 
clustering $\clustering=\{\cls_1,\ldots,\cls_{\ncls}\}$,
upper bounds $\nsel_1,\ldots,\nsel_{\ncls}$.
\\
\textbf{Output:} 
Intra-cluster dispersion set $\sel=\{\sel_1,\ldots,\sel_{\ncls}\}$.
\begin{algorithmic}[1]
\For {$j=1,\ldots,\ncls$} 
\State $\sel_j\gets\emptyset$;
$\pr{\nsel}_j \gets 2\lfloor\frac{\nsel_j}{2}\rfloor$
\EndFor
\While{(exists $j$ such that $|\sel_j|<\pr{\nsel}_j$)}
	\State $R\gets \{ j\in[\ncls] \text{ s.t.\ } |\sel_j|<\pr{\nsel}_j \}$
	\State Select a pair $\edge=\{u,v\}$ over clusters in $R$ whose both elements $u$ and $v$ belong in some cluster $\cls_{j}$ and $(\nsel_{j}-1)\,\dist(\edge)$ is maximized \label{alg:step:maxe}
	\State $\sel_{j} \gets \sel_{j} \cup \{u,v\}$
	\State $\univset \gets \univset \setminus \{u,v\}$
\EndWhile
\For {$j=1,\ldots,\ncls$} \If{($\nsel_j$ is odd)}
	\State Add an arbitrary element $v\in \cls_j$ into $\sel_j$ \label{alg:step:odd}
\EndIf \EndFor
\State {\bf Return} $\sel_1,\ldots,\sel_\ncls$
\end{algorithmic}
\end{alg}
\vspace{-5mm}
\noindent
\rule{\columnwidth}{0.001mm} 

\smallskip
Our first result is to demonstrate the relative power of 
greedily adding pairs of elements, instead of individual elements. 
In particular, in contrast to Lemma~\ref{lemma:unbounded} where we showed 
that \gelms has an unbounded approximation ratio, 
\gp offers a provable constant-factor approximation.
\begin{thm}
\label{thm:intra}
Algorithm \gp is 6-appro\-xi\-ma\-tion for \icd problem.
\end{thm}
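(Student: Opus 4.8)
The plan is to compare the value $\distintra(\sel)$ produced by \gp against an optimal solution $\opt = \{\opt_1,\ldots,\opt_\ncls\}$, cluster by cluster, using the pairing structure that the greedy selection imposes. First I would reduce to the even case: since the odd-budget correction in step~\ref{alg:step:odd} only adds elements (and all distances are nonnegative), it can only increase the objective, so it suffices to prove the bound when every $\nsel_j$ is even, i.e.\ $\pr\nsel_j=\nsel_j$. For the analysis I would fix the sequence of pairs $\edge_1,\edge_2,\ldots,\edge_t$ chosen by the while-loop in the order they are picked, where $\edge_i$ is assigned to some cluster $c(i)$; at the moment $\edge_i$ is selected, the quantity $(\nsel_{c(i)}-1)\,\dist(\edge_i)$ is maximal over all admissible pairs inside all still-unfinished clusters.

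The heart of the argument is a charging scheme that maps the optimal solution into the greedy pairs. Within each cluster $\cls_j$, I would decompose $\dist(\opt_j)=\sum_{u,v\in\opt_j}\dist(u,v)$ and bound it in terms of a perfect matching on $\opt_j$: by a standard averaging/triangle-inequality argument (the Hassin–Rubinstein-style bound used for \textsc{MaxSum}), the total weight $\dist(\opt_j)$ of a clique on $\nsel_j$ vertices is at most a constant times $(\nsel_j-1)$ times the weight of a maximum-weight perfect matching $M_j$ on those vertices — concretely $\dist(\opt_j)\le 2(\nsel_j-1)\,w(M_j)$ or a similar bound, where the factor $(\nsel_j-1)$ is exactly what appears in step~\ref{alg:step:maxe}. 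This is why the algorithm weights a candidate pair by $(\nsel_j-1)\,\dist(\edge)$ rather than $\dist(\edge)$: it makes the greedy comparison commensurate with the per-cluster clique-to-matching ratio. Then I would argue that each edge of each optimal matching $M_j$, when it is still "available" (neither endpoint taken by greedy), competes with the greedy choice at that step, and once greedy has used up its budget in a cluster the optimal matching there is already partly destroyed; an exchange argument over the steps shows $\sum_i (\nsel_{c(i)}-1)\,\dist(\edge_i)$ is within a constant factor of $\sum_j (\nsel_j-1)\,w(M_j)$, and hence within a constant of $\sum_j \dist(\opt_j)$. Tracking the constants — the clique/matching factor, the factor lost because greedy may "use up" a cluster before optimal's matching edges there are considered, and the factor $2$ from edges being counted in both orientations — should multiply out to $6$.

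I would make the exchange argument precise by a potential/amortization bookkeeping: assign to greedy pair $\edge_i$ a set of optimal-matching edges it "blocks" (those in $M_{c(i)}$ sharing a vertex with $\edge_i$, plus possibly edges of $M_j$ in clusters that get exhausted), show each optimal-matching edge is blocked a bounded number of times, and use maximality of $\edge_i$ to dominate the weight of the blocked edges by $\dist(\edge_i)$ up to the constant. The main obstacle I anticipate is handling the interaction between clusters of very different budgets: because the greedy criterion multiplies by $(\nsel_j-1)$, a single large-budget cluster can be serviced before a small one even though in $\opt$ the small cluster's pairs matter; I will need the $(\nsel_j-1)$ weighting to cancel correctly on both sides of the inequality, and to ensure that when a cluster's budget runs out the remaining optimal weight there has already been "paid for" by the greedy pairs taken from it. Carefully setting up the charging so that each $\opt$-matching edge is charged exactly once, with a loss factor independent of the budgets, is the delicate part; everything else is triangle inequality and nonnegativity.
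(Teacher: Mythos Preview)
Your approach --- bound each $\dist({\opj})$ by $2(\nsel_j-1)$ times the weight of a maximum matching on ${\opj}$, then charge each optimal-matching edge to a greedy pair via a bounded-multiplicity mapping --- is workable and is in fact the route the paper takes for the \emph{submodular} version (Theorem~\ref{thm:intra-subm}), where each greedy pair is the image of at most three optimal-matching edges. For Theorem~\ref{thm:intra} itself the paper uses a different decomposition: it splits each ${\opj}$ into a ``conflict'' part $\bar{\opj} = {\opj} \cap (\sel\setminus\sel_j)$ (optimal elements that greedy placed in other clusters) and the remainder; $\dist({\opj}\setminus\bar{\opj})$ is handled per cluster by the Hassin--Rubinstein single-cluster bound (factor~$2$), while the conflict contribution $\sum_{v\in\bar{\opj}}\dist(v,{\opj})$ is bounded distance-by-distance --- each $\dist(v,u)$ is dominated by the first greedy pair touching $\{u,v\}$ and then spread across that pair's entire selected set by triangle inequality, costing a factor~$4$. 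The two pieces add to~$6$. A side benefit of this split is that it inherits Hassin--Rubinstein's direct handling of odd~$\nsel_j$, which the matching route does not.

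Two concrete gaps in your plan. First, the reduction to even budgets is not sound as written: step~\ref{alg:step:odd} indeed only increases $\distintra(\sel)$, but the optimum you must beat still has up to $\nsel_j$ elements in cluster~$j$, so your perfect-matching bound on ${\opj}$ does not apply when $|{\opj}|$ is odd without a leftover-vertex argument --- this is exactly why the matching-based Theorem~\ref{thm:intra-subm} pays the extra factor $\min\{\frac{\nselmin+1}{\nselmin-1},2\}$ for odd budgets, which you would inherit. Second, you never close the loop from $\sum_i (\nsel_{c(i)}-1)\,\dist(\edge_i)$ back to $\distintra(\sel)$: after the charging you control the \emph{weighted} greedy pairs, but you still need the triangle-inequality step $\dist(\sel_j)\ge (\nsel_j-1)\sum_i \dist(\edge_{ji})$ (the paper states the slightly stronger $\dist(\sel_j)\ge \nsel_j\sum_i\dist(\tilde{\edge}_{ji})$ in the proof of Theorem~\ref{thm:intra-subm}) to finish; your list of tracked constants omits it, and without it nothing multiplies out to~$6$. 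With both fixes your argument recovers the factor~$6$ for even budgets, matching the paper's alternative proof, but for the full statement the conflict/non-conflict split is sharper.
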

\note[Guangyi]{The different b/w Thm \ref{thm:intra} and Thm \ref{thm:intra-subm} is that the former works even when $\nsel_j$ is odd, by leveraging the proof of Hassin et al.~\cite{hassin1997approximation} for a singleton cluster.
The mapping onto the selected pairs is slightly different.
Here, we only need to map edges in \opj that coincides with \sel.}
\begin{proof}
For simplicity, we assume $\dist(S_j)$ counts each pair once without loss of generality.
Let $\op$ be an optimal solution to an instance of the \icd problem, 
and let $\sel$ be the solution computed by \gp.
Let ${\opj}$ by the subset of the optimal solution in cluster $\cls_j$, 
so that $\op=\bigcup_{j=1}^{\ncls} {\opj}$,
and sim\-ilar\-ly $\sel=\bigcup_{j=1}^{\ncls} \sel_j$.
Let $\bar{{\opj}}$ be the intersection of ${\opj}$ with the elements of the solution $\sel$ other than $\sel_j$, 
i.e., $\bar{{\opj}}={\opj} \cap (\sel\setminus\sel_j)$. 

If $\bar{{\opj}}=\emptyset$, 
then \gp can be analyzed se\-pa\-ra\-te\-ly for each cluster $\cls_j$, 
as by Hassin et al.~\cite{hassin1997approximation}, 
and get $\dist(\sel_j) \ge \frac 12 \dist({\opj})$, 
obtaining a 2-approximation.
Same reasoning applies when $\bar{{\opj}}$ is defined to also exclude items in ${\opj} \cap (\sel\setminus\sel_j)$ that are picked up by other clusters (i.e., $\sel\setminus\sel_j$) \emph{after} $\sel_j$ is saturated (i.e., $2\lfloor\nsel_j/2\rfloor$ pairs selected).

If $\bar{{\opj}}\ne\emptyset$ and every item in it is picked up before $\sel_j$ is saturated, then 
$\dist(\sel_j) \ge \frac 12 \dist({\opj}\setminus\bar{{\opj}})$ still holds.
Additionally, we have to bound the remaining part $\dist({\opj})-\dist({\opj}\setminus\bar{{\opj}})$.
By the definition of distance \dist over set of elements, we have
$
\dist({\opj}) - \dist({\opj}\setminus\bar{{\opj}}) = - \dist(\bar{{\opj}}) + \sum_{v\in\bar{{\opj}}} \dist(v,{\opj}) \le \sum_{v\in\bar{{\opj}}} \dist(v,{\opj}).
$
Consider now an element $v\in\bar{{\opj}}$ and a pairwise distance $\dist(v,u)$ of it, 
Let $\{x,y\}$ be the first selected pair in $\sel$ of some cluster $\cls_i$ such that $\{x,y\}\cap\{u,v\}\ne\emptyset$. 
Then by greedy,
\eq{
\dist(v,u) \le (\nsel_i-1)/(\nsel_j-1) \, \dist(x,y).
}
For each pair $\{x,y\}$, when $x$ coincides with some object in \opj, $(\nsel_i-1)/(\nsel_j-1)\dist(x,y)$ may be needed at most $\nsel_j-1$ times,
and when $y$ also coincides, it is needed at most $2(\nsel_j-1)$ times in total, i.e., $2(\nsel_i-1)\dist(x,y)$ for each selected pair $\{x,y\}$ is sufficient.
By triangle inequality, we have
\eq{
(\nsel_i-1) \, \dist(x,y) \le  \dist(x,y) + \sum_{w\in(\sel_{i}\setminus\{x,y\})} \dist(x,w) + \dist(w,y),
}
so each $\dist(x,y)$ is only needed at most twice, but $\dist(x,w)$ between two selected pairs may be used up to four times.
Summing over all pairwise distances of all elements $v\in\bar{\op_j}$ and all clusters we~get
\eq{
\sum_{j=1}^{\ncls} \sum_{v\in\bar{{\opj}}} \dist(v,{\opj})  
\le 
4 \sum_{j=1}^{\ncls} \dist(\sel_j) = 4\, \distintra(\sel).
}
We can conclude that
\begin{align*}
6\,\distintra(\sel) & \ge 
\sum_{j=1}^{\ncls} \dist({\opj}\setminus\bar{{\opj}}) + \sum_{j=1}^{\ncls} \sum_{v\in\bar{{\opj}}} \dist(v,{\opj}) \\
&  \ge \distintra(\op).
\end{align*}
\hfill
\end{proof}

Furthermore, we can show that our analysis of \gp is tight.
\begin{rmk}
\label{rmk:intra}
The 6-approximation bound obtained in Theorem~\ref{thm:intra}
for \gp is tight, even if the distance function \dist is a metric.
\end{rmk}
\begin{proof}
We present a tight example in the Appendix.
\hfill\end{proof}

\section{Dispersion with submodular quality function}\label{sec:subm}
\note[Guangyi]{
For the modular case, the trick used in \cite{gollapudi2009axiomatic} does not work in multiple clusters.
In one cluster, we can simply replace $\dist(\cdot)$ with a new one $\pr{\dist}(\cdot)$, and we are good to go.
\begin{align*}
\pr{\dist}(x,y) = \frac{1}{\lambdasubm(\nsel-1)} \quality(\{x,y\}) + \dist(x,y)
\end{align*}
However, with multiple clusters, we can't do this anymore as we may have different $\nsel_j$...
}

We now turn our attention to the \icdq problem, 
with the quality function \quality being monotone non-decreasing and submodular. 
Notice that non-negative modular functions serve as a special case of such submodular functions, so we will devote our discussion only to the submodular case.
In this case, the \gp algorithm gives an equally strong guarantee 
with the following intuitive modification: 
In the $i$-{th} iteration we select the $i$-{th} 
pair of elements $\{u,v\}$ in some cluster $j$, 
among all unsaturated clusters, so as to maximize
\begin{equation}
\label{eq:alg:intra:maxedge:subm}
\feva_{\{u,v\}}(\sel^{i-1}) = 
	\quality_{\{u,v\}}(\sel^{i-1}) + \lambdasubm\,2(\pr{\nsel}_{j}-1)\dist(u,v),
\end{equation}
\note[Guangyi]{Mind the difference, $2\dist(u,v)=\dist(\{u,v\})$.}
where $\pr{\nsel}_{j}= 2\lceil\nsel_{j}/2\rceil$ and
$\quality_{\{u,v\}}(\setofedge^{i-1}) = \quality(\setofedge^{i-1}\cup\{u,v\}) - \quality(\setofedge^{i-1})$
is the marginal gain to the quality function \quality from adding the pair $\{u,v\}$ 
into the current solution $\setofedge^{i-1}$.
In our analysis we will make the mild assumption that the budgets $\nsel_j$ are even.
We propose a simple heuristic to deal with the case that some budget $\nsel_j$ is odd, 
but the approximation guarantee becomes slightly weaker.

Before we introduce our main theorem, we show the performance of the algorithm for a single cluster. 
The proof is deferred to the Appendix.
\begin{lem}
\label{lemma:subm-one-cluster}
The \gp algorithm with greedy rule~(\ref{eq:alg:intra:maxedge:subm})
gives a 4-approximation 
(or $(4 \min \{ \frac{\nsel+1}{\nsel-1}, 2 \})$-approximation if $\nsel$ is odd)
for the \scdq problem. 
\end{lem}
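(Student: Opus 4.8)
I would argue by a charging scheme in the spirit of Hassin et al.\ and Borodin et al., treating an even budget first (so $\pr{\nsel}=\nsel$) and reducing the odd case to it at the end. Fix an optimal solution $\op$; by monotonicity of $\quality$ and since $\dist(\cdot)$ only increases when pairs are added, we may assume $|\op|=\nsel$. Take $\dist$ to count each pair once and $\quality(\emptyset)=0$. Let $\edge_1=\{x_1,y_1\},\dots,\edge_t=\{x_t,y_t\}$, $t=\nsel/2$, be the pairs chosen by \gp in order, $\setofedge^{i}$ the union of the first $i$ of them, $g_i(\{u,v\})=\quality_{\{u,v\}}(\setofedge^{i-1})+2\lambdasubm(\nsel-1)\dist(u,v)$ the quantity maximised at step $i$ (rule~(\ref{eq:alg:intra:maxedge:subm})), and $G_i=g_i(\edge_i)$. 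A first observation I would use throughout is that the greedy gains are non-increasing, $G_1\ge\dots\ge G_t$: $\edge_{i+1}$ is already admissible at step $i$ so $G_i\ge g_i(\edge_{i+1})$, and $g_i(\edge_{i+1})\ge g_{i+1}(\edge_{i+1})$ since $\quality_{\edge_{i+1}}$ has diminishing returns and the distance term is unchanged.

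\emph{Lower bound for the algorithm.} Telescoping gives $\quality(\sel)=\sum_i\quality_{\edge_i}(\setofedge^{i-1})$. The pairs $\{x_i,y_i\}$ form a perfect matching of $\sel$; for any two of them the four crossing distances sum to at least $\dist(x_i,y_i)+\dist(x_j,y_j)$ by the triangle inequality, and summing over all $\binom{t}{2}$ such choices gives $\dist(\sel)\ge t\sum_i\dist(x_i,y_i)$. As $t=\nsel/2\ge(\nsel-1)/2$, this yields $\lambdasubm\dist(\sel)\ge\frac14\cdot 2\lambdasubm(\nsel-1)\sum_i\dist(x_i,y_i)$ and hence $\distintrasubm(\sel)\ge\frac14\sum_i G_i$.

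\emph{Upper bound for the optimum.} Since $|\sel|=|\op|=\nsel$, fix a bijection $\sigma:\sel\to\op$ equal to the identity on $\sel\cap\op$; then $\{\sigma(x_i),\sigma(y_i)\}_{i=1}^t$ is a perfect matching of $\op$, and both $\sigma(x_i)$ and $\sigma(y_i)$ are still present when step $i$ runs (each equals $x_i$ or $y_i$ and is removed only at step $i$, or lies in $\op\setminus\sel$ and is never removed), so $G_i\ge g_i(\{\sigma(x_i),\sigma(y_i)\})$. Summing, submodularity and monotonicity turn the quality part into at least $\quality(\op)-\quality(\sel)$. For the dispersion part the induced matching of $\op$ need not be maximum, so it cannot be compared to $\dist(\op)$ directly; instead I would compare $G_i$ against \emph{every} optimal pair present at step $i$, charge each optimal pair $\{o,o'\}$ to the first step $m$ at which one of its endpoints is removed (both endpoints are then still present, so $2\lambdasubm(\nsel-1)\dist(o,o')\le G_m$), bound how many optimal pairs are charged to each step, and absorb the optimal pairs whose endpoints \gp never touches by spreading them over all $t$ steps using $G_1\ge\dots\ge G_t$. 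This should give $\sum_i G_i\ge\distintrasubm(\op)$, and combined with the previous paragraph $\distintrasubm(\sel)\ge\frac14\distintrasubm(\op)$.

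The genuine obstacle is this last estimate: carrying out the dispersion charging so that the constant is exactly $4$ without leaking extra factors from (i) optimal elements that \gp itself selects, which must be charged to earlier and hence larger gains, (ii) optimal elements \gp never touches, charged uniformly over the $t$ steps, and (iii) keeping the quality and dispersion charges consistent so that no leftover $\quality(\sel)$ term survives; this accounting is the single-cluster specialisation of the counting in the proof of Theorem~\ref{thm:intra}. For odd $\nsel$, \gp performs $(\nsel-1)/2$ pair-iterations and then inserts one arbitrary element (Step~\ref{alg:step:odd} of Algorithm~\ref{alg:intra}), which only raises the objective; repeating the analysis with $\pr{\nsel}=\nsel+1$ in place of $\nsel$ costs a factor $\pr{\nsel}/(\nsel-1)=(\nsel+1)/(\nsel-1)$ in the dispersion accounting because the rule then over-weights each selected pair relative to the $\nsel-1$ elements its pairs cover, while for very small $\nsel$ the trivial bound $2$ is never beaten, giving $4\min\{(\nsel+1)/(\nsel-1),2\}$.
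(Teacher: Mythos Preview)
Your lower bound $\distintrasubm(\sel)\ge\frac14\sum_i G_i$ is correct and essentially the paper's. The gap is in the upper bound, and the missing idea is precisely the one you sidestep: instead of a bijection $\sigma:\sel\to\op$, fix a \emph{maximum} weight perfect matching $\{\ops{\edge}_{\pr{i}}\}$ of $\op$, and map each $\ops{\edge}_{\pr{i}}$ to a selected pair $\edge_i$ with $\feva_{\edge_i}(\setofedge^{i-1})\ge\feva_{\ops{\edge}_{\pr{i}}}(\setofedge^{i-1})$ so that every $\edge_i$ is hit at most twice. (If $\ops{\edge}_{\pr{i}}$ intersects some $\edge_i$, map it to the first such; both endpoints of $\ops{\edge}_{\pr{i}}$ are then still available at that step. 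If it intersects none, its endpoints are never removed and it may be mapped to any under-loaded $\edge_i$; a counting argument shows capacity $2$ per $\edge_i$ always suffices.) This single mapping handles quality and dispersion simultaneously: summing gives $2\sum_i G_i\ge\quality(\op)-\quality(\setofedge)+2\lambdasubm(\nsel-1)\sum_{\pr{i}}\dist(\ops{\edge}_{\pr{i}})$, and because the matching is maximum, $2(\nsel-1)\sum_{\pr{i}}\dist(\ops{\edge}_{\pr{i}})\ge\dist(\op)$. Your bijection-induced matching need not be maximum, which is exactly why you were forced into a separate charging of all $\binom{\nsel}{2}$ optimal pairs; that route can be made to work but not with constant $4$, and your sketch does not close it.

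Your odd case is also off. The greedy rule~(\ref{eq:alg:intra:maxedge:subm}) uses $\pr{\nsel}=2\lceil\nsel/2\rceil=\nsel+1$, not $2\lfloor\nsel/2\rfloor$: the algorithm for \scdq\ rounds \emph{up}, selects $\nsel+1$ elements in $(\nsel+1)/2$ pair-iterations, and then \emph{removes} one element, namely the one minimising $\fmgnv(v_i)=\quality_{v_i}(\setofedge^{i}\setminus v_i)+\lambdasubm\,\dist(v_i,\setofedge^{\nsel+1}\setminus v_i)$ over an arbitrary ordering of $\setofedge^{\nsel+1}$. Since $\sum_i\fmgnv(v_i)=\distintrasubm(\setofedge^{\nsel+1})$, the removed element contributes at most a $1/(\nsel+1)$ fraction, and one checks $\distintrasubm(\setofedge)\ge\frac{\nsel-1}{\nsel+1}\distintrasubm(\setofedge^{\nsel+1})$; chaining with the even-$\nsel$ bound for $\setofedge^{\nsel+1}$ and monotonicity $\distintrasubm(\op^{\nsel+1})\ge\distintrasubm(\op)$ yields $4\cdot\frac{\nsel+1}{\nsel-1}$. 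Your ``round down and add an arbitrary element'' follows the dispersion-only Algorithm~\ref{alg:intra}, but with a submodular quality term the arbitrary final element gives you nothing to work with, and your heuristic accounting of the factor $(\nsel+1)/(\nsel-1)$ does not correspond to what actually happens.
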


Notice that in the case of a single cluster, the \gelms algorithm is superior 
giving a 2-ap\-prox\-i\-ma\-tion~\cite{borodin2012max}.
However, when many clusters are present, the ap\-prox\-i\-ma\-tion ratio of \gelms becomes unbounded
(Lemma~\ref{lemma:unbounded}), 
while \gp maintains a constant-factor ap\-prox\-i\-ma\-tion guarantee, as we show next.

\begin{thm}
\label{thm:intra-subm}
The \gp algorithm with greedy rule~(\ref{eq:alg:intra:maxedge:subm})
gives a 6-approximation for the \icdq problem, 
when all $\nsel_j$ are even.
\end{thm}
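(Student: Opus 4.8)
The plan is to run the proof of Theorem~\ref{thm:intra} essentially verbatim, using Lemma~\ref{lemma:subm-one-cluster} in place of the Hassin et al.\ single-cluster bound and folding the quality term in through monotonicity and submodularity. Fix an optimal solution $\op=\bigcup_j\opj$ and the output $\sel=\bigcup_j\sel_j$ of \gp under rule~(\ref{eq:alg:intra:maxedge:subm}), recording the order $p_1,p_2,\ldots$ in which the pairs are picked and writing $\sel^i=p_1\cup\cdots\cup p_i$; let $j(t)$ be the cluster to which $p_t$ is added. Exactly as in Theorem~\ref{thm:intra}, call an element of $\opj$ \emph{leaked} if \gp selects it for a cluster $\cls_i$ with $i\ne j$ before $\sel_j$ is saturated, let $\bar\opj\subseteq\opj$ collect the leaked elements, and note $\bar\opj\subseteq\sel$. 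Since every $\nsel_j$ is even (here $\pr\nsel_j=\nsel_j$), partition each $\opj$ into $\nsel_j/2$ pairs, padding with dummy elements of zero distance and zero marginal quality if $|\opj|<\nsel_j$, and choosing the pairing so that leaked elements are matched to leaked elements whenever possible; this padding is the only place evenness is used, which is also why the odd case only gets the weaker heuristic guarantee.

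For the quality term I would use $\quality(\op)\le\quality(\op\cup\sel)\le\quality(\sel)+\sum_{v\in\op\setminus\sel}\quality_v(\sel)$ (monotonicity and submodularity), and observe that $\op\setminus\sel$ consists only of optimal elements \gp never picked. Each such $v$ sits in a matched optimal pair $\{u,v\}\subseteq\opj$ that is a feasible candidate for $\cls_j$ at some iteration $t$ (iteration $1$ always qualifies when $\nsel_j\ge2$; one should pick $t$ carefully to control multiplicity), so the greedy choice at step $t$ gives $\feva_{p_t}(\sel^{t-1})\ge\feva_{\{u,v\}}(\sel^{t-1})\ge\quality_{\{u,v\}}(\sel^{t-1})\ge\quality_v(\sel^{t-1})\ge\quality_v(\sel)$, the last step by submodularity since $\sel^{t-1}\subseteq\sel$. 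Charging every missed optimal element to such a pair, with the matching chosen so that each greedy pair absorbs only $O(1)$ such charges, bounds $\sum_{v\in\op\setminus\sel}\quality_v(\sel)$ by a constant times $\sum_t\feva_{p_t}(\sel^{t-1})$. Here the telescoping identity $\sum_t\quality_{p_t}(\sel^{t-1})=\quality(\sel)$ and the triangle inequality inside each $\sel_j$ (as in Theorem~\ref{thm:intra}, using $(\nsel_j-1)\dist(x,y)\le\dist(x,\sel_j)+\dist(y,\sel_j)$ for a greedy pair $\{x,y\}$) give $\sum_t 2(\pr\nsel_{j(t)}-1)\dist(p_t)\le4\distintra(\sel)$, hence $\sum_t\feva_{p_t}(\sel^{t-1})\le\quality(\sel)+4\lambdasubm\distintra(\sel)$.

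For the dispersion term I would repeat the decomposition of Theorem~\ref{thm:intra}: $\distintra(\op)\le\sum_j\dist(\opj\setminus\bar\opj)+\sum_j\sum_{v\in\bar\opj}\dist(v,\opj)$. The first sum is bounded by the single-cluster argument behind Lemma~\ref{lemma:subm-one-cluster} applied to the non-leaked optimal elements of each cluster; the second (the ``leakage'') is bounded exactly as in Theorem~\ref{thm:intra}, comparing each $\dist(v,u)$ with the first greedy pair touching $\{u,v\}$ via rule~(\ref{eq:alg:intra:maxedge:subm}) and then invoking the triangle inequality. Both steps reproduce the metric charges of Theorem~\ref{thm:intra} (each cross-distance of a $\sel_j$ used at most four times, each selected-pair distance at most twice, so the metric part reassembles into $\le4\lambdasubm\distintra(\sel)$) and, in addition, produce quality slack of the form $\tfrac{1}{2(\pr\nsel_j-1)}\quality_{p_t}(\sel^{t-1})$ per charge; since each greedy pair receives at most $2(\pr\nsel_j-1)$ such charges the $(\pr\nsel_j-1)$ factors cancel and all the slack telescopes into a bounded multiple of $\quality(\sel)$. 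Assembling the quality and dispersion estimates and using $\lambdasubm\distintra(\sel)\le\distintrasubm(\sel)$ and $\quality(\sel)\le\distintrasubm(\sel)$ then yields $\distintrasubm(\op)\le6\,\distintrasubm(\sel)$.

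The hard part is precisely this constant bookkeeping: a naive combination of the single-cluster factor $4$ from Lemma~\ref{lemma:subm-one-cluster} with the leakage factor $4$ from Theorem~\ref{thm:intra} overshoots $6$. The fix is to charge the missed-quality marginals, the home-dispersion terms, and the leakage terms all to the \emph{same} greedy pairs, and to verify that the number of charges each pair receives matches the $2(\pr\nsel_j-1)$ weight in rule~(\ref{eq:alg:intra:maxedge:subm}) \emph{exactly} (not just up to a constant), so that after cancellation the distance charges reassemble into $\distintra(\sel)$ with coefficient $\le4$ and the quality charges into $\quality(\sel)$ with coefficient $\le2$. The secondary technical points are: checking feasibility of each matched optimal pair at the iteration to which it is charged (and choosing that iteration so the multiplicity bounds above hold); observing that leaked elements contribute nothing to the quality slack since they already lie in $\sel$; and the harmless padding of each $\opj$ to size exactly $\nsel_j$.
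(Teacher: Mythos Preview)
Your plan diverges from the paper's proof and, as you yourself flag, does not actually close the crucial constant-counting step. The paper does \emph{not} reuse the leaked/non-leaked decomposition of Theorem~\ref{thm:intra}; instead it argues directly on pairs via a single mapping, and this is exactly what makes the constant come out to~$6$.

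Concretely, the paper takes a maximum matching $\{\ops{\edge}_{j\pr{i}}\}$ of each $\opj$ and defines one map $\pi$ from optimal matching pairs to greedy pairs: if $\ops{\edge}_{j\pr{i}}$ intersects some greedy pair $\edge_i$ selected while cluster $j$ was still unsaturated, send it to the earliest such $\edge_i$; otherwise send it to an ``infrequently used'' greedy pair of $\setofedge_j$. Because the sets $\opj$ are pairwise disjoint, each greedy pair can be hit by at most two optimal matching pairs through intersection (one per endpoint), and the pigeonhole on the second rule contributes at most one more---so the multiplicity of $\pi$ is at most~$3$. The greedy rule~(\ref{eq:alg:intra:maxedge:subm}) then yields
\[
3\sum_i \feva_{\edge_i}(\setofedge^{i-1}) \;\ge\; \sum_{j,\pr{i}} \feva_{\ops{\edge}_{j\pr{i}}}(\setofedge),
\]
and the factor~$6$ drops out after applying the maximum-matching inequality $2(\nsel_j-1)\sum_{\pr{i}}\dist(\ops{\edge}_{j\pr{i}})\ge \dist(\opj)$ on the right and the triangle inequality $\dist(\setofedge_j)\ge \nsel_j\sum_i \dist(\edge_i)$ on the left. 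Quality and dispersion are never separated.

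Your route---splitting $\distintra(\op)$ into a home term and a leakage term as in Theorem~\ref{thm:intra}, and handling $\quality(\op)$ by a separate element-wise submodularity bound---forces you to compare individual distances $\dist(v,u)$ against greedy values of $\feva$, producing the ``quality slack'' terms you mention. The difficulty is that the home piece, bounded through Lemma~\ref{lemma:subm-one-cluster}, costs a factor~$4$ on the \emph{combined} objective (not a factor~$2$ on dispersion alone, as Hassin et al.\ give when there is no quality term), while your element-wise leakage accounting charges another factor~$4$ on dispersion plus additional $\quality(\setofedge)$ slack. Summed, these exceed~$6$; the ``fix'' you describe---making all charges line up exactly with the $2(\pr\nsel_j-1)$ weight so that nothing is double-counted---is precisely what the pair-to-pair map above accomplishes in one stroke, and it is not clear your element-level leakage bookkeeping can be tightened to the same constant without essentially rederiving that map. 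The missing idea is to drop the leaked/non-leaked split and work with a single pair-level mapping of multiplicity~$3$.
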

\begin{proof}
We denote the optimal solution as $\op=\cup_{j=1}^{\ncls} {\opj}$, 
and the solution found by \gp as $\setofedge = \cup_{j=1}^{\ncls} \setofedge_j$.
Also, let $\{\edge_i\}$ be the set of pais of elements selected by \gp.
As in the proof of Theorem~\ref{thm:intra} we define
$\bar{\opj} = {\opj} \cap (\setofedge\setminus\setofedge_j)$.
If $\bar{\opj} = \emptyset$ for every cluster $\cls_j$, 
then by Lemma \ref{lemma:subm-one-cluster}, 
each set $\setofedge_j$ is 4-approximation with respect to ${\opj}$.
Hence $\setofedge$ is 4-approximation.

Thus, we proceed to analyze the case that $\bar{\opj} \ne\emptyset$ for some $j$.
For each set ${\opj}$ we consider a {\em maximum matching} $\{\ops{\edge}_{j\pr{i}}\}$.%
\footnote{Those maximum matchings are only to be considered conceptually, 
	for the purposes of the proof, not computed, as \op is not known.}
We define a mapping $\pi$ from $\{\ops{\edge}_{j\pr{i}}\}$ to $\{\edge_i\}$, as follows:
\squishlist
\item if $\ops{\edge}_{j\pr{i}}\cap\edge_i\ne\emptyset$ for some $i$ and cluster $j$ is not saturated before the $i$-th iteration, 
then $\pi(\ops{\edge}_{j\pr{i}})=\edge_i$ where the smallest $i$ is chosen;
\item otherwise $\ops{\edge}_{j\pr{i}}$ is mapped to some infrequently mapped pair $\edge_i$ of $\setofedge_j$.
\squishend
The key point of mapping $\pi$ is that each optimal pair $\ops{\edge}_{j\pr{i}}$ that falls into the second requirement
has to be mapped to some selected pair $\edge_i$ within cluster $j$, because pairs in other clusters can not be guaranteed to be better than $\ops{\edge}_{j\pr{i}}$, as they may be selected simply because cluster $j$ is saturated, i.e., a selected pair is greedy only among unsaturated clusters in its iteration.
Notice that due to such a requirement, 
each pair $\edge_i$ can be the image of at most three pairs $\ops{\edge}_{j\pr{i}}$:
two from the two pairs that intersect with $\edge_i$ and one from some pair that does not intersect with 
$\edge_i$ but is in the same cluster as $\edge_i$.
As each pair $\edge_i$ achieves the best marginal gain at the time it is selected, we obtain the following inequality:
\eq{
\phi_{\edge_i}(\setofedge^{i-1}) \ge \phi_{\ops{\edge}_{j\pr{i}}}(\setofedge^{i-1}) \ge \phi_{\ops{\edge}_{j\pr{i}}}(\setofedge).
}
The second inequality above is due to submodularity.

Assume that a total of $\nselsum$ elements are selected from all clusters. We have
\eq{
3 \sum_{i=1}^{\nselsum/2} \eqand \phi_{\edge_i}(\setofedge^{i-1})
\ge \sum_{j=1}^{\ncls}\sum_{\pr{i}=1}^{\nsel_j/2} \phi_{\ops{\edge}_{j\pr{i}}}(\setofedge) \eqnl
\eqand = \sum_{j=1}^{\ncls}\sum_{\pr{i}=1}^{\nsel_j/2} \quality(\setofedge\cup\ops{\edge}_{j\pr{i}}) - \quality(\setofedge) + \lambda 2(\nsel_j-1)\dist(\ops{\edge}_{j\pr{i}}) \eqnl
\eqand \ge \quality(\setofedge\cup\op) - \quality(\setofedge) + \sum_{j=1}^{\ncls}\sum_{\pr{i}=1}^{\nsel_j/2} \lambda 2(\nsel_j-1)\dist(\ops{\edge}_{j\pr{i}}) \eqnl
\eqand \ge \quality(\op) - \quality(\setofedge) + \sum_{j=1}^{\ncls}\sum_{\pr{i}=1}^{\nsel_j/2} \lambda 2(\nsel_j-1)\dist(\ops{\edge}_{j\pr{i}}),
}
where the last inequality uses the monotonicity of \quality.
\note[Guangyi]{monotonicity is used in the last step above.}
Putting together that
\eq{
3 \sum_{i=1}^{\nselsum/2} \phi_{\edge_i}(\setofedge^{i-1}) = 3 \quality(\setofedge) + \lambda\sum_{i=1}^{\nselsum/2} 6(\nsel_{\fedgetoclsnum(\edge_i)}-1) \dist(\edge_{i}) 
}
where $\fedgetoclsnum$ maps a pair $\edge_i$ to its cluster number, 
the inequalities 
$2(\nsel_j-1)\sum_{\pr{i}=1}^{\nsel_j/2} \dist(\ops{\edge}_{j\pr{i}})\ge \dist({\opj})$ 
(by the virtue of $\{\ops{\edge}_{j\pr{i}}\}$ being a maximum matching)
and 
$\dist(\setofedge_j) \ge \nsel_j \sum_{i=1}^{\nsel_j/2} \dist(\tilde{\edge}_{ji})$ 
(by triangular inequality),
where $\{\tilde{\edge}_{ji}\}$ can be an arbitrary set of disjoint pairs in $\setofedge_j$, 
we eventually have
\eq{
6  \distintrasubm(\setofedge) 
\eqand = 6 \quality(\setofedge) + \lambda 6\sum_{j=1}^k \dist(\setofedge_j) \eqnl
\eqand \ge 6\quality(\setofedge) + \lambda \sum_{i=1}^{\nselsum/2} 6\nsel_{\fedgetoclsnum(\edge_i)} \dist(\edge_{i}) \eqnl
\eqand \ge 4\quality(\setofedge) + \lambda \sum_{i=1}^{\nselsum/2} 6(\nsel_{\fedgetoclsnum(\edge_i)}-1) \dist(\edge_{i}) \eqnl
\eqand \ge \quality(\op) + \sum_{j=1}^{\ncls}\sum_{\pr{i}=1}^{\nsel_j/2} \lambda 2(\nsel_j-1)\dist(\ops{\edge}_{j\pr{i}}) \ge \distintrasubm(\op).
}
As modular functions is a subset of submodular functions, 
Remark~\ref{rmk:intra} applies, 
and thus, our bound for \gp is tight.%
\hfill\end{proof}

When some $\nsel_j$ is odd, 
we run the algorithm to select $2 \lceil \nsel_j/2 \rceil$ elements for each cluster, and then we remove an element in each cluster with an odd $\nsel_j$ that contributes the least with respect to some measure.
A guarantee of $(6\min\{ \frac{\nselmin+1}{\nselmin-1},2 \})$-approximation can be proved,
where $\nselmin$ is the smallest odd $\nsel_j$.
We defer the details to the Appendix.

\section{Scalable algorithms}\label{sec:scalability}
As the data size increases, a vanilla implementation of our methods may be impractical. 
In this section we ex\-am\-ine the scalability of our algorithm, and discuss ex\-ten\-sions to improve its efficiency.
We focus on scaling~up the algorithm on a single machine; 
we are not after parallel or distributed implementations, which are of independent interest,  
but beyond the scope of this paper. 
We first inspect the time complexity of the three methods we have presented.
\begin{center}
\begin{small}
\begin{tabular}{ll}
\toprule
Method & Time complexity \\
\midrule
\gelms & $\bigO(\ncls \nselmax^2 \nunivset)$\\
\lsi   & $\bigO(\frac{1}{\lseps} \ncls \nselmax^2 \nunivset )$\\
\gp    & $\bigO(\ncls \nselmax \nunivset^2)$ \\
\bottomrule
\end{tabular}
\end{small}
\end{center}
Recall that  
\nunivset is the number of elements in~\univset, 
\ncls is the number of clusters in~\clustering, 
and 
\nselmax is the maximum budget over all $\nsel_j$. 
The accuracy parameter \lseps controls the termination of the \lsi algorithm.

In a practical scenario we expect \nunivset to be much larger than \ncls and \nselmax, 
thus, we are interested in optimizing time complexity with respect to~\nunivset.
From the above table we can see that \gelms and \lsi
\footnote{The optimal objective value is assumed to be a constant, and a swap is performed only when it increases the objective by more than $\epsilon$} 
are both linear in~\nunivset, 
though the latter is slower in practice due to the need for convergence.
The quadratic complexity of \gp is due to the need of finding the {\em furthest pair} of elements, 
which we also refer to as {\em diameter}.
To cope with the quadratic complexity, 
we borrow a well-known $2$-approximation algorithm for finding the diameter in a metric space.
Hereinafter, we use the terms ``pair'' and ``edge'' interchangeably.
\begin{lem}
\label{lemma:diameter}
Picking an element $x$ arbitrarily, and the element $y$ that is the furthest from $x$, 
gives a $2$-approx\-imation to the problem of finding the diameter of a set of elements in a metric space.
\end{lem}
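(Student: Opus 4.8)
The plan is to prove the standard ``pick a point, go to the farthest point'' gives a $2$-approximation for the diameter of a finite metric space $(\univset,\dist)$. Let $\diam = \max_{u,v} \dist(u,v)$ denote the true diameter, attained by some pair $\{u^*,v^*\}$, and let the algorithm pick an arbitrary element $x$ and then $y \in \argmax_{z} \dist(x,z)$. We must show $\dist(x,y) \ge \frac{1}{2}\diam$.

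First I would observe that, by the choice of $y$ as the farthest element from $x$, we have $\dist(x,y) \ge \dist(x,u^*)$ and $\dist(x,y) \ge \dist(x,v^*)$. Then I would apply the triangle inequality to the triple $u^*, x, v^*$, obtaining
\eq{
\diam = \dist(u^*,v^*) \le \dist(u^*,x) + \dist(x,v^*) \le \dist(x,y) + \dist(x,y) = 2\,\dist(x,y).
}
Rearranging gives $\dist(x,y) \ge \frac{1}{2}\diam$, i.e., the returned pair has distance at least half the diameter, which is exactly a $2$-approximation. Since $\dist(x,y) \le \diam$ trivially, the value returned lies in $[\frac12 \diam, \diam]$.

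There is essentially no obstacle here: the argument is a two-line consequence of the triangle inequality, and the only thing to be careful about is that the metric property (in particular the triangle inequality) is genuinely needed — on a general non-metric distance this bound fails — which is why the lemma is stated for a metric space. I would also note in passing that the running time is linear in the number of elements, which is the whole point of invoking this lemma in the scalability discussion, but that is a remark rather than part of the proof.
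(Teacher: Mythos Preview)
Your proof is correct and matches the paper's approach exactly: apply the triangle inequality to the true diameter pair through the arbitrary point $x$, then bound each leg by $\dist(x,y)$ using the choice of $y$ as the farthest point from $x$. The paper's proof is the same two-line argument, just written more tersely.
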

\begin{proof}
If optimal diameter is $(u,v)$, then 
$\dist(u,v) \le \dist(u,x) + \dist(x,v) \le 2\dist(x,y)$.
\hfill\end{proof}

The idea behind Lemma~\ref{lemma:diameter} can be used not only to improve the running time
but also to provide additional flexibility in adding elements to our solution, 
and obtaining solutions of higher quality in practice.

In particular, for the case of maximizing dispersion in a single cluster (problem \scd), 
if ${\sel}^{i-1}$ is the set of currently-selected elements, 
in the next iteration
we can select the first element $x$ to be the one that maximizes $\dist(x,{\sel}^{i-1})$.
For the second element $y$, 
instead of selecting the one that is the furthest from $x$ (as Lemma~\ref{lemma:diameter} dictates), 
we have observed that, in practice, it is better to  
find an element $y$ that balances its distance between $x$ and ${\sel}^{i-1}$, 
via a hyper-parameter $\param\in[0,1]$.
In particular, if $y^*$ is the furthest element from $x$, 
we find $y$ so that $\dist(x,y)\ge\param\dist(x,y^*)$
and  $\dist(y,{\sel}^{i-1})$ is max\-imized.
We refer to this version of \gp as \gpa.

\gps{\param=1} is the approximate version of \gp,
which selects the diameter using the 2-approximation method of Lemma~\ref{lemma:diameter}.
\gps{\param=0} de\-gen\-er\-ates into \gelms.
For intermediate values of $\param$, \gpa combines the best of both worlds:
the theoretical properties of \gp and the em\-pir\-i\-cal effectiveness of \gelms.

\begin{lem}\label{lem:gpa:1cls}
The \gpa algorithm with $0<\param\le 1$,
is a $(4/\param)$-ap\-proximation for the \scd problem.
\end{lem}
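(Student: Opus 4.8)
The plan is to replay the single-set analysis of Hassin et al.~\cite{hassin1997approximation} --- the argument behind the $2$-approximation of \gp that is cited in the proof of Theorem~\ref{thm:intra} --- while tracking the loss caused by the fact that \gpa selects an \emph{approximate} rather than an exact diameter at each iteration. Assume first that $\nsel$ is even and set $t=\nsel/2$; the odd case is handled exactly as in Theorem~\ref{thm:intra}, by appending one arbitrary element at the end, which only contributes a lower-order term. Let $\op$ be an optimal solution, let $e_1,\dots,e_t$ be the pairs chosen by \gpa in the order they are chosen, and let $\univset^{(i)}$ be the set of elements still available at the start of iteration $i$, so that $\univset^{(1)}=\univset$ and $\univset^{(i+1)}=\univset^{(i)}\setminus e_i$; write $\mathrm{diam}(A)=\max_{u,v\in A}\dist(u,v)$.

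The single new ingredient is the estimate $\dist(e_i)\ge\frac{\param}{2}\,\mathrm{diam}(\univset^{(i)})$, valid for every iteration $i$. To prove it, let $x_i$ be the first of the two elements of $e_i$: by Lemma~\ref{lemma:diameter} the element of $\univset^{(i)}$ farthest from $x_i$ lies at distance at least $\frac12\,\mathrm{diam}(\univset^{(i)})$ from $x_i$, while by construction \gpa picks the second element $y_i$ so that $\dist(x_i,y_i)\ge\param\cdot\max_{z\in\univset^{(i)}}\dist(x_i,z)$; the two facts combine to give the estimate. Since $\univset^{(1)}\supseteq\univset^{(2)}\supseteq\cdots$, the diameters are non-increasing, so the estimate also yields $\dist(e_i)\ge\frac{\param}{2}\dist(e_j)$ whenever $i\le j$: the selected distances are ``almost non-increasing'', which is the only monotonicity property of exact diameters that Hassin's argument uses.

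Given the estimate, I would run the Hassin bookkeeping. The triangle inequality gives $\dist(\sel)\ge\frac{\nsel}{2}\sum_{i=1}^{t}\dist(e_i)$, since $e_1,\dots,e_t$ form a perfect matching of $\sel$, and $\dist(\op)\le(\nsel-1)\sum_{i'=1}^{t}\dist(f_{i'})$ for a maximum matching $\{f_1,\dots,f_t\}$ of $\op$; both follow from the triangle inequality, as in the proof of Theorem~\ref{thm:intra-subm} specialized to a single cluster. The remaining, and central, step is to dominate $\sum_{i'}\dist(f_{i'})$ by $\sum_i\dist(e_i)$: following Hassin et al., one constructs a bijection between the optimal matching edges $f_{i'}$ and the iterations $i$ such that, when $f_{i'}$ is matched to iteration $i$, both endpoints of $f_{i'}$ are still available at the start of iteration $i$, hence $\dist(f_{i'})\le\mathrm{diam}(\univset^{(i)})$; our estimate then upgrades this to $\dist(f_{i'})\le\frac{2}{\param}\dist(e_i)$, so $\sum_{i'}\dist(f_{i'})\le\frac{2}{\param}\sum_i\dist(e_i)$. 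Chaining the three inequalities gives $\dist(\op)\le(\nsel-1)\cdot\frac{2}{\param}\sum_i\dist(e_i)\le\frac{4(\nsel-1)}{\param\,\nsel}\,\dist(\sel)<\frac{4}{\param}\,\dist(\sel)$, which is the claimed approximation.

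The step I expect to be the main obstacle is the construction of that bijection, exactly as in Hassin et al. The difficulty is that a single greedy pair $e_i$ deletes two elements and may therefore ``break'' two optimal matching edges at once, so one cannot simply pair the heaviest $f_{i'}$ with the first iteration, the second-heaviest with the second, and so on; instead one has to argue that whenever the elements already picked in iterations $1,\dots,i-1$ contain many vertices of $\op$, the corresponding early greedy pairs were themselves long --- which is where the ``almost non-increasing'' property and a pigeonhole over the first $i-1$ iterations come in. The same bijection also absorbs the optimal edges that \gpa never touches, whose endpoints survive to the end and so are bounded by every $\mathrm{diam}(\univset^{(i)})$. Everything else is the triangle-inequality computation already carried out for Theorem~\ref{thm:intra}, specialized to a single cluster and with the uniform factor $\frac{2}{\param}$ carried through.
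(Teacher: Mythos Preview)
Your matching-plus-bijection route is not what the paper (or Hassin et al.) does: both argue by induction on $\nsel$. One peels off the first selected pair $e_1$, applies the $(4/\param)$-bound inductively to $\sel\setminus e_1$ versus an optimal $(\nsel-2)$-set in $\univset\setminus e_1$, and then compares the two increments. Adding $e_1$ back raises $\dist(\sel)$ by at least $(\nsel-1)\,\dist(e_1)\ge\frac{\param}{2}(\nsel-1)\,\mathrm{diam}(\univset)$ (triangle inequality plus your estimate), while deleting from $\op$ two elements that cover $e_1\cap\op$ costs at most $(2\nsel-3)\,\mathrm{diam}(\univset)$. Since $\tfrac{4}{\param}\cdot\tfrac{\param}{2}(\nsel-1)=2\nsel-2\ge 2\nsel-3$, the ratio is preserved and the induction closes. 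No global pairing between $\{f_{i'}\}$ and $\{e_i\}$ is ever needed.

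Your bijection step has a genuine gap. A bijection with the survival property exists iff for every $k$ at most $k$ of the optimal matching edges have been broken by $e_1,\dots,e_k$ (this is Hall's condition, because the ``available iterations'' of each $f_{i'}$ form an initial segment $\{1,\dots,\tau(f_{i'})\}$). But a single $e_i$ removes two vertices and can break two disjoint $f$'s at once; already $e_1$ may kill both $f_1$ and $f_2$, violating Hall at $k=1$. Your proposed rescue via the almost-non-increasing property points the wrong way: to charge the surplus $f$ to some later iteration $j>1$ you would need $\dist(f)\le\frac{2}{\param}\dist(e_j)$, yet all you know is $\dist(f)\le\mathrm{diam}(\univset^{(1)})$ and $\dist(e_j)\ge\frac{\param}{2}\mathrm{diam}(\univset^{(j)})$ with $\mathrm{diam}(\univset^{(j)})$ possibly much smaller than $\mathrm{diam}(\univset^{(1)})$ --- the inequalities do not chain. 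If you instead allow each iteration to absorb two $f$'s (which is what the counting honestly supports), the middle inequality becomes $\sum_{i'}\dist(f_{i'})\le\frac{4}{\param}\sum_i\dist(e_i)$ and the final bound degrades to $8/\param$, not $4/\param$. The inductive argument sidesteps this entirely by never attempting a global matching between optimal and greedy pairs.
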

\begin{proof}
This argument can be proved by induction, 
as in the work of Hassin et al.~\cite{hassin1997approximation}.
Details are given in the Appendix.
\hfill\end{proof}

The case of maximizing dispersion for multiple clusters (problem \icd) is similar. 
In each iteration, we select a furthest pair in each unsaturated cluster,
and add to the solution the best pair out of these candidates. 

\begin{thm}\label{thm:gpa:multi}
The \gpa algorithm with $0<\param\le 1$,
is a $(12/\param)$-approximation for the \icd problem.
\end{thm}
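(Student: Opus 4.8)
The goal is to combine the two ingredients we already have: the $(4/\param)$-approximation for a single cluster (Lemma~\ref{lem:gpa:1cls}) and the factor-$3$ loss incurred when passing from a single cluster to many overlapping clusters (the mapping argument in the proof of Theorem~\ref{thm:intra}). The claimed bound $12/\param = 3 \cdot 4/\param$ suggests exactly this composition, so the plan is to re-run the proof of Theorem~\ref{thm:intra} almost verbatim, replacing each appeal to the exact-diameter greedy guarantee of Hassin et al.\ by the corresponding $\param$-approximate guarantee.

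\textbf{Step 1: Set up the same decomposition.} Let $\op = \bigcup_j \opj$ be an optimal solution and $\sel = \bigcup_j \sel_j$ the output of \gpa. As before, split each $\opj$ into $\opj \setminus \bar{\opj}$ and $\bar{\opj} = \opj \cap (\sel \setminus \sel_j)$, where in fact we should (as in Theorem~\ref{thm:intra}) put into $\bar{\opj}$ only those shared elements that are grabbed by another cluster \emph{before} $\sel_j$ is saturated. For the part $\opj \setminus \bar{\opj}$, the elements never leave $\cls_j$'s pool before $\sel_j$ fills up, so \gpa restricted to $\cls_j$ behaves exactly like the single-cluster algorithm of Lemma~\ref{lem:gpa:1cls}, giving $\dist(\sel_j) \ge \frac{\param}{4}\,\dist(\opj \setminus \bar{\opj})$; summing over $j$ yields $\sum_j \dist(\sel_j) \ge \frac{\param}{4}\sum_j \dist(\opj \setminus \bar{\opj})$.

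\textbf{Step 2: Charge the cross-cluster part.} For the remainder $\dist(\opj) - \dist(\opj\setminus\bar{\opj}) \le \sum_{v\in\bar{\opj}}\dist(v,\opj)$, we reuse the charging scheme of Theorem~\ref{thm:intra}: each pairwise distance $\dist(v,u)$ with $v\in\bar{\opj}$ is bounded against the first selected pair $\{x,y\}$ touching $\{u,v\}$. The one change is the greedy inequality. In Theorem~\ref{thm:intra} we had $\dist(v,u)\le \frac{\nsel_i-1}{\nsel_j-1}\dist(x,y)$ because $\{x,y\}$ was an exact diameter of cluster $i$ at that moment; now $\{x,y\}$ is only a $\param$-approximate furthest pair (Lemma~\ref{lemma:diameter} with the extra slack from $\param$), so the weighted-diameter comparison loses a factor $1/\param$, giving $(\nsel_j-1)\dist(v,u)\le \frac{1}{\param}(\nsel_i-1)\dist(x,y)$. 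The rest of that argument — using the triangle inequality to spread $(\nsel_i-1)\dist(x,y)$ over the distances inside $\sel_i$, and counting that each internal distance is charged at most four times — is unchanged, so we obtain $\sum_j\sum_{v\in\bar{\opj}}\dist(v,\opj) \le \frac{4}{\param}\,\distintra(\sel)$.

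\textbf{Step 3: Combine.} Adding the two bounds, $\frac{4}{\param}\distintra(\sel) \ge \sum_j \dist(\opj\setminus\bar{\opj})$ from Step~1 (rescaled) and $\frac{4}{\param}\distintra(\sel)\ge \sum_j\sum_{v\in\bar{\opj}}\dist(v,\opj)$ from Step~2, gives
\[
\frac{12}{\param}\,\distintra(\sel) \;\ge\; \sum_{j}\dist(\opj\setminus\bar{\opj}) + \sum_{j}\sum_{v\in\bar{\opj}}\dist(v,\opj) \;\ge\; \distintra(\op),
\]
which is the claim. (The odd-$\nsel_j$ cleanup step of Algorithm~\ref{alg:intra} only adds elements, so it cannot decrease $\distintra(\sel)$, exactly as in Theorem~\ref{thm:intra}.)

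\textbf{Main obstacle.} The delicate point is Step~2: making sure the factor $1/\param$ enters \emph{only once}, multiplicatively, and interacts cleanly with the $\frac{\nsel_i-1}{\nsel_j-1}$ budget-weighting. Concretely, one must check that the \gpa selection rule really maximizes $(\nsel_i-1)\dist(\{x,y\})$ up to a factor $\param$ over \emph{all} unsaturated clusters simultaneously — i.e.\ that the per-cluster $\param$-approximate diameter, combined with picking the best weighted candidate across clusters, still dominates $(\nsel_j-1)\dist(v,u)$ for the relevant optimal pair — and that the triangle-inequality redistribution that turns $(\nsel_i-1)\dist(x,y)$ into internal distances of $\sel_i$ does not need the exact-diameter property anywhere. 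I also want to double-check the boundary behaviour (the odd-budget rounding $\pr{\nsel}_j$, and the earlier-vs-later saturation split) so that the single-cluster invocation in Step~1 is legitimate; these are bookkeeping issues rather than genuine difficulties, but they are where an off-by-$O(1)$ slip would hide.
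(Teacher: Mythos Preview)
Your approach is essentially the paper's: re-run the Theorem~\ref{thm:intra} argument with the single-cluster factor $2$ replaced by the $4/\param$ of Lemma~\ref{lem:gpa:1cls}, and with the greedy comparison in the cross-cluster charging weakened by the approximate-diameter slack. The paper proves the theorem exactly this way (it actually writes out an ``enhanced'' variant achieving $10$ and says the $12/\param$ case is identical).

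There is, however, an arithmetic inconsistency in your write-up that hides a real slip. In Step~2 you assert $(\nsel_j-1)\dist(v,u)\le \tfrac{1}{\param}(\nsel_i-1)\dist(x,y)$ and conclude $\sum_j\sum_{v\in\bar{\opj}}\dist(v,\opj)\le \tfrac{4}{\param}\distintra(\sel)$; in Step~3 you then add $\tfrac{4}{\param}+\tfrac{4}{\param}$ and announce $\tfrac{12}{\param}$. The mismatch is not a typo in Step~3 but an error in Step~2: the pair $\{x,y\}$ produced by \gpa\ is only a $\tfrac{\param}{2}$-approximate diameter, not a $\param$-approximate one, because Lemma~\ref{lemma:diameter} already costs a factor $2$ \emph{before} the $\param$-relaxation is applied. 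Concretely, the candidate pair in cluster $j$ satisfies $\dist(x_j,y_j)\ge\param\,\dist(x_j,y_j^*)\ge\tfrac{\param}{2}\,\mathrm{diam}(\cls_j)\ge\tfrac{\param}{2}\,\dist(v,u)$, and since the algorithm picks the best weighted candidate across clusters, one gets $(\nsel_j-1)\dist(v,u)\le\tfrac{2}{\param}(\nsel_i-1)\dist(x,y)$. Propagating this through the triangle-inequality redistribution (which, as you note, does not use exactness anywhere) yields $\sum_j\sum_{v\in\bar{\opj}}\dist(v,\opj)\le \tfrac{8}{\param}\distintra(\sel)$, and then $\tfrac{4}{\param}+\tfrac{8}{\param}=\tfrac{12}{\param}$ as claimed. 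Your ``main obstacle'' paragraph already flags precisely this spot; you just need to carry the extra factor $2$ through.
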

\begin{proof}
The proof is similar to that in Theorem \ref{thm:intra}.
Details are given in the Appendix.
\hfill\end{proof}

The situation is more complicated when a quality function \quality is involved.
As before we focus on monotone non-decreasing and submodular functions.
First notice that evaluating Equation~(\ref{eq:alg:intra:maxedge:subm}) 
has quadratic complexity, as we need to optimize quality as well, 
so we need another fast heuristic. 
We use $\hat{\edge}$ to represent a 2-approximation diameter, and $\edge$ for an exact one.
In this case, we select the first endpoint $x$ that maximizes $\quality_{x}(\hat{\setofedge}^{i-1})$ in the $i$-th iteration, and then search for the best pair $\hat{\edge}_i^x$ with respect to 
$\feva_{\hat{\edge}_i^x}(\hat{\setofedge}^{i-1})$. 
Note that this pair is not necessarily the maximum-distance pair $\hat{\edge}^{\text{max},x}_{i}$ that includes $x$.
As a result:
\eq{
2  \feva_{\hat{\edge}_i^x}(\hat{\setofedge}^{i-1})  \ge
2  \feva_{\hat{\edge}^{\text{max},x}_i}(\hat{\setofedge}^{i-1})  \ge
 \feva_{\edge_i}(\hat{\setofedge}^{i-1}) . 
}
Similarly, we can still apply the same trick to look for a first endpoint $\pr{x}$ of quality at least $\param \quality_{x}(\hat{\setofedge}^{i-1})$ and a second endpoint $\pr{y}$ that gains an additional value at least 
$\param (\feva_{\hat{\edge}_i^{\pr{x}}}(\hat{\setofedge}^{i-1}) - \quality_{\pr{x}}(\hat{\setofedge}^{i-1}))$.
Hence,
\eq{
\frac 2\param   \feva_{(\pr{x},\pr{y})}(\hat{\setofedge}^{i-1}) \ge
\frac 2\param   \feva_{\hat{\edge}^{\text{max},\pr{x}}_i}(\hat{\setofedge}^{i-1})  \ge \feva_{\edge_i}(\hat{\setofedge}^{i-1}).
}
We can show the following.
\begin{lem}\label{lem:gpa:1cls:subm}
The \gpa algorithm with $0<\param\le 1$ and a greedy rule in Equation~(\ref{eq:alg:intra:maxedge:subm}) is $(6/\param)$-approximation for the \scdq problem when $\nsel$ is even.
\end{lem}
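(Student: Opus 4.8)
The plan is to specialise the matching-based argument behind Lemma~\ref{lemma:subm-one-cluster} and Theorem~\ref{thm:intra-subm} to a single cluster, folding in the only new feature of \gpa: at step $i$ it adds not the exact maximiser $\edge_i$ of~(\ref{eq:alg:intra:maxedge:subm}) but a pair $\pr{\edge}_i=(\pr{x},\pr{y})$ with $\frac{2}{\param}\feva_{\pr{\edge}_i}(\hat{\setofedge}^{i-1}) \ge \feva_{\edge_i}(\hat{\setofedge}^{i-1})$ --- the inequality derived just above the statement, which combines the $\param$-relaxed choice of the two endpoints, the $2$-approximation of the diameter from Lemma~\ref{lemma:diameter}, and monotonicity of \quality. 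Since $\nsel$ is even, $\pr{\nsel}=\nsel$, so the greedy distance weight $2(\nsel-1)\dist(\cdot)$ is exactly the one the exact analysis uses, and apart from this single $2/\param$ slack everything else should transcribe directly.

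First I would fix an optimal solution $\op$ with $|\op|=\nsel$, pick a maximum matching $\{\ops{\edge}_{\pr{i}}\}_{\pr{i}=1}^{\nsel/2}$ of $\op$ (a conceptual device, as in Theorem~\ref{thm:intra-subm}), and build the map $\pi$ from optimal matching pairs to \gpa's selected pairs as in the single-cluster case: each $\ops{\edge}_{\pr{i}}$ meeting some selected pair goes to the first one it meets, and the others to lightly-loaded selected pairs, so that $\pi$ maps at most two optimal pairs to each selected pair. This part is purely combinatorial and is insensitive to the use of approximate diameters, so it carries over unchanged. The core is then the pointwise chain, valid for $\pi(\ops{\edge}_{\pr{i}})=\pr{\edge}_i$: $\frac{2}{\param}\feva_{\pr{\edge}_i}(\hat{\setofedge}^{i-1}) \ge \feva_{\edge_i}(\hat{\setofedge}^{i-1}) \ge \feva_{\ops{\edge}_{\pr{i}}}(\hat{\setofedge}^{i-1}) \ge \feva_{\ops{\edge}_{\pr{i}}}(\hat{\setofedge})$, where the middle step is exact greedy optimality at step $i$ ($\ops{\edge}_{\pr{i}}$ is still available then, nothing incident to it having been removed) and the last is submodularity of \quality, whose distance term does not depend on the current set.

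Next I would sum over all optimal matching pairs: using the load bound of $\pi$, telescoping the quality marginals on the left, and on the right using submodularity together with monotonicity for the quality part ($\sum_{\pr{i}}\quality_{\ops{\edge}_{\pr{i}}}(\hat{\setofedge}) \ge \quality(\op)-\quality(\hat{\setofedge})$) and the maximum-matching bound $2(\nsel-1)\sum_{\pr{i}}\dist(\ops{\edge}_{\pr{i}})\ge\dist(\op)$ for the distance part, one arrives at an inequality of the form $\quality(\op)+\lambda\dist(\op) \le a\,\quality(\hat{\setofedge}) + \frac{b}{\param}\,\lambda(\nsel-1)\sum_i\dist(\pr{\edge}_i)$ with small explicit constants $a,b$. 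Closing the loop with the triangle-inequality bound $\dist(\hat{\setofedge})\ge\nsel\sum_i\dist(\pr{\edge}_i)$ and balancing the quality and dispersion terms the way Theorem~\ref{thm:intra-subm} does (lower-bounding $\frac{6}{\param}\distintrasubm(\hat{\setofedge})$ by a quality term plus a dispersion slack and matching that against the summed greedy inequality) should yield $\distintrasubm(\op)\le\frac{6}{\param}\distintrasubm(\hat{\setofedge})$.

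The step I expect to be the real obstacle is precisely this last balancing: one must partition the summed greedy inequality between its quality and dispersion parts so that the factor $2/\param$ lost in pair selection, the doubling in the mapping, and the gap between $\nsel$ and $\nsel-1$ in the triangle and maximum-matching bounds combine to exactly $6/\param$, rather than the looser constant that a naive multiplication of \gp's $4$-approximation by $2/\param$ would produce --- and this has to hold uniformly in $\nsel$, which is where the $\nsel$-even hypothesis and $\param\le1$ are spent, exactly as in the appendix proof of Lemma~\ref{lemma:subm-one-cluster}. Everything else --- availability of $\ops{\edge}_{\pr{i}}$ at step $i$, the load bound on $\pi$, the monotonicity and submodularity manipulations, and the triangle-inequality estimates --- is a routine adaptation of arguments already present in the paper.
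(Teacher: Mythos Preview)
Your outline follows the paper's matching-and-mapping skeleton, but the uniform use of the bound $\frac{2}{\param}\feva_{\pr{\edge}_i}(\hat{\setofedge}^{i-1}) \ge \feva_{\ops{\edge}_{\pr{i}}}(\hat{\setofedge}^{i-1})$ for every optimal pair mapped to $\pr{\edge}_i$ is the gap. With load at most $2$ on $\pi$, that gives only $\frac{4}{\param}\sum_i\feva_{\pr{\edge}_i}(\hat{\setofedge}^{i-1}) \ge \sum_{\pr{i}}\feva_{\ops{\edge}_{\pr{i}}}(\hat{\setofedge})$, and carrying out exactly the remaining steps you describe (submodularity, monotonicity, the maximum-matching bound, and $\dist(\hat{\setofedge})\ge\nsel\sum_i\dist(\pr{\edge}_i)$) you end up with $\frac{8}{\param}\distintrasubm(\hat{\setofedge}) \ge \distintrasubm(\op)$, not $\frac{6}{\param}$. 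No amount of rebalancing the quality and dispersion terms at the end fixes this; the loss is already baked in at the summed greedy inequality.

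The missing idea is to exploit the \emph{asymmetry} between the two endpoints of $\pr{\edge}_i=(\pr{x},\pr{y})$. By construction of \gpa, $\pr{\edge}_i$ is a $\param$-approximation to the best pair \emph{containing} $\pr{x}$: hence any optimal matching pair that contains $\pr{x}$ satisfies $\frac{1}{\param}\feva_{\pr{\edge}_i}(\hat{\setofedge}^{i-1}) \ge \feva_{\ops{\edge}}(\hat{\setofedge}^{i-1})$, with the full $\frac{2}{\param}$ factor needed only for an optimal pair through $\pr{y}$ (or one disjoint from $\pr{\edge}_i$). When $\pr{\edge}_i$ carries two optimal pairs that both meet it, one must go through $\pr{x}$ and the other through $\pr{y}$ (the matching is disjoint), so the combined charge is $\frac{1}{\param}+\frac{2}{\param}=\frac{3}{\param}$, not $\frac{4}{\param}$. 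Arranging the second-clause assignments of $\pi$ so that a selected pair never receives load two without one of the two containing $\pr{x}$ then yields $\frac{3}{\param}\sum_i\feva_{\pr{\edge}_i}(\hat{\setofedge}^{i-1}) \ge \sum_{\pr{i}}\feva_{\ops{\edge}_{\pr{i}}}(\hat{\setofedge})$, and the rest of your derivation closes with the desired $\frac{6}{\param}$.
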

\begin{proof}
The proof is similar to Theorem \ref{thm:intra-subm}, the details can be found in the Appendix.
\hfill\end{proof}

With multiple clusters, the selection scheme is the same as its modular counterpart, but with a different objective, that is, we select a best pair as in Lemma \ref{lem:gpa:1cls:subm} within each unsaturated cluster and pick the final best one among them.
\begin{thm}\label{thm:gpa:multi:subm}
The \gpa algorithm with $0<\param\le 1$ and a greedy rule in Equation~(\ref{eq:alg:intra:maxedge:subm}) is $(12/\param)$-approximation  for the \icdq problem when $\nsel_j$ is all even.
\end{thm}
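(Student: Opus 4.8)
The plan is to reuse the argument of Theorem~\ref{thm:intra-subm} almost verbatim, inserting a single additional slack factor of $2/\param$ at the one point where the greediness of the pair selection is used. First I would fix notation: let $\op=\bigcup_{j=1}^{\ncls}\opj$ be an optimal solution, $\setofedge=\bigcup_{j=1}^{\ncls}\setofedge_j$ the solution returned by \gpa, $\{\edge_i\}$ the pairs added by \gpa in the order they are selected, $\setofedge^{i-1}$ the union of the first $i-1$ of them, and $\bar\opj=\opj\cap(\setofedge\setminus\setofedge_j)$. If $\bar\opj=\emptyset$ for every $j$, then $\setofedge_j$ is built independently inside each cluster $\cls_j$, so Lemma~\ref{lem:gpa:1cls:subm} applies cluster by cluster and already gives a $(6/\param)$-approximation, which is stronger than the claim; so I would assume $\bar\opj\neq\emptyset$ for some~$j$.

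Next, for each $\opj$ I would fix (only conceptually, since $\op$ is unknown) a maximum matching $\{\ops\edge_{j\pr i}\}$; since every $\nsel_j$ is even we have $\pr\nsel_j=\nsel_j$ and this matching is perfect. I would define the mapping $\pi$ from optimal matching pairs to selected pairs exactly as in Theorem~\ref{thm:intra-subm}: send $\ops\edge_{j\pr i}$ to $\edge_i$ for the smallest $i$ with $\ops\edge_{j\pr i}\cap\edge_i\neq\emptyset$ while cluster $j$ is still unsaturated, and otherwise to an infrequently used pair of $\setofedge_j$; the same counting shows each $\edge_i$ receives at most three pre-images. The only modification is the greedy inequality. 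At step $i$, the selected pair $\edge_i$ is, within its own cluster, only a $\frac\param2$-approximately-best pair (it is the best $2$-approximate-diameter pair obtained after a $\param$-relaxed choice of both endpoints, as in the derivation preceding Lemma~\ref{lem:gpa:1cls:subm}) and, across clusters, the best of the per-cluster candidates; hence $\frac2\param\feva_{\edge_i}(\setofedge^{i-1})\ge\feva_{e}(\setofedge^{i-1})$ for every pair $e$ lying in a cluster that is unsaturated at step $i$, and in particular for $e=\ops\edge_{j\pr i}$ whenever $\ops\edge_{j\pr i}$ is charged to $\edge_i$. Submodularity of \quality (the distance term of $\feva$ is set-independent) then gives $\feva_{\ops\edge_{j\pr i}}(\setofedge^{i-1})\ge\feva_{\ops\edge_{j\pr i}}(\setofedge)$.

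Summing over all charges, with multiplicity at most three, I would get
\[\tfrac6\param\sum_i\feva_{\edge_i}(\setofedge^{i-1})\ \ge\ \sum_{j=1}^{\ncls}\sum_{\pr i}\feva_{\ops\edge_{j\pr i}}(\setofedge),\]
and the right-hand side is at least $\quality(\op)-\quality(\setofedge)+\lambdasubm\sum_{j,\pr i}2(\nsel_j-1)\dist(\ops\edge_{j\pr i})$ by the telescoping-plus-monotonicity step of Theorem~\ref{thm:intra-subm}. On the left, the quality marginals telescope, so $\sum_i\feva_{\edge_i}(\setofedge^{i-1})=\quality(\setofedge)+\lambdasubm\sum_i2(\nsel_{\fedgetoclsnum(\edge_i)}-1)\dist(\edge_i)$; then, using $\dist(\setofedge_j)\ge\nsel_j\sum\dist(\tilde\edge_{ji})$ for any disjoint family $\{\tilde\edge_{ji}\}$ in $\setofedge_j$ (triangle inequality) and $2(\nsel_j-1)\sum_{\pr i}\dist(\ops\edge_{j\pr i})\ge\dist(\opj)$ (maximum matching), the arithmetic of Theorem~\ref{thm:intra-subm} goes through unchanged except that every occurrence of the multiplicity constant $3$ is replaced by $6/\param$, so the final bound $6$ becomes $12/\param$, yielding $\tfrac{12}\param\,\distintrasubm(\setofedge)\ge\distintrasubm(\op)$.

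The hard part is checking that this factor $2/\param$ is genuinely the only new loss, i.e.\ that every other ingredient of the Theorem~\ref{thm:intra-subm} proof survives unchanged. The two points needing care are: the ``same-cluster'' clause of $\pi$ must still be forced, i.e.\ an optimal pair not picked up early by an intersecting selected pair must be charged inside $\setofedge_j$, because \gpa only compares pairs across \emph{unsaturated} clusters and so gives no guarantee relating $\ops\edge_{j\pr i}$ to pairs selected after $\cls_j$ saturates; and the ``at most three pre-images'' bound, which uses only that the $\opj$ are pairwise disjoint (so each element of $\edge_i$ lies in at most one optimal matching pair) and is therefore unaffected. Everything past the modified greedy inequality is identical bookkeeping to Theorems~\ref{thm:intra-subm} and~\ref{thm:gpa:multi}, so no further idea is needed.
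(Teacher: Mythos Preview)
Your proposal is correct and follows essentially the same mapping-and-charging mechanism the paper uses throughout: fix a maximum matching in each $\opj$, charge each optimal matching edge to a selected pair via the two-rule map $\pi$, bound the multiplicity by three, and push the calculation of Theorem~\ref{thm:intra-subm} through with the extra $2/\param$ slack at the greedy step. This yields exactly the stated $12/\param$ bound.

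The only noteworthy difference is cosmetic. The paper's appendix does not actually write out the $12/\param$ argument; instead it proves an ``enhanced'' variant (smarter choice of the first endpoint across clusters) with multiplicity~$5$, obtaining a $10$-approximation at $\param=1$, and declares the $12/\param$ case ``nearly identical.'' Your write-up is the direct, non-enhanced version the theorem statement asks for, and is arguably cleaner for that purpose. One small caution: your opening remark that Lemma~\ref{lem:gpa:1cls:subm} ``applies cluster by cluster'' when every $\bar\opj=\emptyset$ is a bit loose, since the quality marginals still depend on elements selected in other clusters; but this is harmless because your general mapping argument already covers that case uniformly, just as in Theorem~\ref{thm:intra-subm}.
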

\begin{proof}
The proof is similar to that for one single cluster. 
Details are given in the Appendix.
\hfill\end{proof}

When some $\nsel_j$ is odd, we can apply the same technique as in Section \ref{sec:subm}, and the approximation ratio  increases only by a multiplicative factor of $\min\{ \frac{\nselmin+1}{\nselmin-1},2 \}$, where $\nselmin$ is the smallest odd $\nsel_j$.

\section{Experiments}\label{sec:experiment}
We evaluate our methods using six datasets, two synthetic, and four real-world data,
to examine their ability to obtain a selection of both good quality and diversity.
We use two of the real-world datasets to evaluate dispersion alone, and 
the other two to evaluate dispersion combined with a submodular quality function. 
All baselines are used in the comparison with small datasets, and only 
the scalable ones are applied to large datasets.

The list of all the algorithms we evaluated, and their abbreviations, is the following:
\gpa (\agpa) --- the proposed algorithm;
\gelms (\agelms),
\lsg (\alsg),
\lsi (\alsi),
\mc (\amc), and
\rn (\arn).
Unlike our algorithm \gpa, all baselines optimize the objective locally, 
so we will run them multiple times with input clusters in different order.
We run our algorithm with several different \param as well.
min/avg/max of multiple runs are reported.
We also investigate the sensitivity of \gpa to its hyperparameter $\param$.

The implementation of our algorithm and all baselines
is available online.\footnote{\href{https://github.com/Guangyi-Zhang/clustered-max-diversity}{https://github.com/Guangyi-Zhang/clustered-max-diversity}}

\spara{Results on synthetic datasets.}
The first dataset is a set of random vectors, each assigned to a fixed number of clusters randomly.
The second one tries to mimic structure of coherent clusters in real data by creating Normally-distributed random vectors around predefined center vectors, one for each cluster (aka prototypes). 
In order to generate overlapping clusters, we also assign a vector to its closest prototype.

We first investigate sensitivity of \gpa to its hyperparameter.
In both of the synthetic datasets, random and prototype, 
a larger $\param$, i.e., trying to find a diameter almost as good as the approximated one, 
is likely to have better performance than smaller ones.
Since our algorithm \gpa draws closer to \gelms with a smaller $\param$,  
this phenomenon also reflects the effectiveness of \gpa.
For more detailed results we refer to the Supplementary material, Table~\sref{tbl:param}.

In the small synthetic datasets shown in Table \ref{tbl:syn-small}, \gpa outperforms all baselines in the random dataset, and unsurprisingly, \rn has the worst performance. \lsg also performs poorly as it is optimizing a different objective, global dispersion.
More interesting results are observed in the prototype dataset. When the number of selected elements is small, \gpa surpasses all others. However, when the number of selected points becomes extremely large, a random initialization for \lsi and \lsg obtains the best result as it does not favor any elements, while other algorithms have a preference for overlapping elements, causing some clusters to not be able to select enough elements.
Interestingly, \rn performs equally well in this case, indicating 
the lack of structure in these problem instances.

It is worth noting that \gpa is quite robust to such extreme cases as it select best pairs globally.
The performance remains similar with data of a larger scale.
Detailed results are deferred to the supplementary material, Table~\sref{tbl:syn}. 
Similarly, validation on the linear-time scalability of \gpa is shown in the supplementary material, 
Figure~\sref{fig:time}.

\begin{table*}
\footnotesize
\caption{Performance for small synthetic datasets ($n=1000$, 10 clusters) where $dim$ stands for dimension.}
\label{tbl:syn-small}
\begin{tabular}{HlHrrrrrrrrrrrrr}
\toprule
{} & Dataset &     n &    b &   dim &  GPmin &  GPavg &  GPmax &  GVmin &  GVavg &  GVmax &  LSImin &  LSIavg &  LSImax &    LSG &     RN \\
exp\_id &         &       &      &     &        &        &        &        &        &        &         &         &         &        &        \\
\midrule
503    &  random &  1000 &   10 &   2 &  0.987 &  0.993 &    \bf{1.0} &  0.975 &  0.984 &  0.988 &   0.970 &   0.975 &   0.979 &  0.942 &  0.656 \\
504    &  random &  1000 &   10 &  10 &  0.993 &  0.996 &    \bf{1.0} &  0.983 &  0.987 &  0.991 &   0.987 &   0.993 &   0.996 &  0.962 &  0.792 \\
505    &  random &  1000 &  100 &   2 &  0.975 &  0.987 &    \bf{1.0} &  0.923 &  0.931 &  0.935 &   0.944 &   0.955 &   0.972 &  0.949 &  0.918 \\
506    &  random &  1000 &  100 &  10 &  0.957 &  0.981 &    \bf{1.0} &  0.935 &  0.941 &  0.944 &   0.925 &   0.935 &   0.942 &  0.937 &  0.919 \\
\midrule
471    &   proto &  1000 &   10 &   2 &  0.975 &  0.989 &  \bf{1.000} &  0.963 &  0.978 &  0.984 &   0.950 &   0.956 &   0.964 &  0.579 &  0.405 \\
473    &   proto &  1000 &   10 &  10 &  0.993 &  0.998 &  \bf{1.000} &  0.978 &  0.988 &  0.993 &   0.992 &   0.994 &   0.997 &  0.804 &  0.751 \\
475    &   proto &  1000 &  100 &   2 &  0.957 &  0.972 &  0.979 &  0.798 &  0.810 &  0.826 &   0.876 &   0.934 &   \bf{1.000} &  \bf{1.000} &  \bf{1.000} \\
476    &   proto &  1000 &  100 &  10 &  0.987 &  0.992 &  0.997 &  0.985 &  0.986 &  0.987 &   0.994 &   0.998 &   \bf{1.000} &  \bf{1.000} &  \bf{1.000} \\
\bottomrule
\end{tabular}
\end{table*}

\spara{Results on real-world data: Topical document aggregation.}
In order to create overlapping topics, we employ LDA \cite{hoffman2010online} on two document datasets, 2\,797 documents from five topics in \textit{20NewsGroups} dataset and 18\,713 blogs, one for each blogger \cite{schler2006effects}, and assign each document to the topics to which estimated probability is larger than a threshold.
Note that Cosine distance is computed between Word2Vec vectors of documents, 
which is a metric for normalized vectors.
We observe that \gpa surpasses the strong baseline \gelms in various data settings, 
showing a superior ability to select diverse documents.
Due to space limitation, the results are shown in the supplementary material, Table~\sref{tbl:doc}.

\spara{Results on real-world data: Descriptors for scho\-lar communities and movie categories.}
Upon the scholar collaboration network provided by \textit{Aminer} \cite{tang2008arnetminer}, we run K-clique community detection algorithm \cite{palla2005uncovering} to discover dense and overlapping scholar communities.
In this dataset, each scholar is associated with a set of keywords, representing their main research interests.
To be more specific, there are 544 communities formed by 6\,529 scholars and 23\,827 keywords.
Our goal is to select a small number of keywords for each community in the hope of fully characterizing it, in a way that the keywords cover as many scholars in its community as possible and they each represent distinct concepts.
In more detail, 
the quality of a set of keywords is defined by the cardinality of their coverage over all scholars, and Jaccard distance is applied between different keywords.
The reason we quantify the quality by global coverage instead of intra-cluster coverage is that the quality function is not submodular with the latter formulation. A simple technique to avoid selecting a keyword that has large global coverage from a cluster on which it has small coverage is by preprocessing, i.e., removing a keyword from clusters it covers weakly.

Another dataset we adopt is \textit{Movielens} \cite{harper2016movielens}, consisting 27\,278 movies, each falling into several of 20 categories and described by many short comments from different users.
We extract adjectives from comments to represent user emotions or opinions.
Our goal is to describe each category by diverse user opinions.
We formulate quality of an adjective as the number of movies it describes, and distance between two adjectives as their Cosine distance in Word2Vec embedding space.

In order to evaluate our methods both for quality and dispersion, 
we compare \gpa with two baselines, \gelms and \mc, the latter of which is specialized for quality.
We vary two model parameters, the number of selected elements ($\nsel$) and the bi-objectives trade-off ($\lambdasubm$), with the increase of which, dispersion plays a more important role in the combined objective.
Besides, due to the large number of clusters, we will not run \gelms multiple times with input clusters of different order. Accordingly, we also fix the hyperparameter of \gpa at $\param=0.95$.

The result is shown in Table \ref{tbl:subm}, where \gpa consistently achieves the best performance with respect to a combined objective under different model parameters. 
Though it can be expected that \gpa will achieve a good compromise between these two objectives, the result turns out to be even better, as \gpa obtains the best quality and the best dispersion simultaneously until model parameters distort the balance between the two objectives too much.
The reason for this is that, \mc fails to select enough elements in highly overlapping clusters, while \gpa manages to do so, in other words, maintaining a balance among different clusters.

A use case for scholar communities is shown in the supplementary material in Table~\sref{tbl:usecase}.

\begin{table*}[h]
\footnotesize
\centering
\caption{Performance on clustering descriptors selection, where columns starting with $Q$ means quality, one with $D$ means dispersion and one with $QD$ means a combination of both.}
\begin{tabular}{Hlrrrrr|rrr|rrrHHH}
\toprule
{} &  Dataset &   $\nsel$ &  $\lambdasubm$ &       QDGP &       QDGV &       QDMC &       DGP &       DGV &       DMC &     QGP &     QGV &     QMC &   PGP &   PGV &  PMC \\
exp\_id &          &     &        &            &            &            &           &           &           &         &         &         &       &       &      \\
\midrule
457    &  scholar &   6 &      1 &   \textbf{22258.25} &   22053.14 &   14949.15 &  \textbf{15734.25} &  15631.14 &   8452.15 &  \textbf{6524.0} &  6422.0 &  6497.0 &  6.00 &  6.00 &  5.0 \\
465    &  scholar &   6 &      5 &   \textbf{85159.28} &   85010.09 &   48757.75 &  \textbf{15728.06} &  15717.02 &   8452.15 &  \textbf{6519.0} &  6425.0 &  6497.0 &  6.00 &  6.00 &  5.0 \\
466    &  scholar &   6 &     10 &  163833.55 &  \textbf{163942.39} &   91018.50 &  15731.06 &  \textbf{15751.24} &   8452.15 &  \textbf{6523.0} &  6430.0 &  6497.0 &  6.00 &  6.00 &  5.0 \\
467    &  scholar &  10 &      1 &   \textbf{52925.29} &   52548.52 &   32461.70 &  \textbf{46396.29} &  46038.52 &  25939.70 &  \textbf{6529.0} &  6510.0 &  6522.0 &  9.98 &  9.99 &  9.0 \\
468    &  scholar &  10 &      5 &  \textbf{238424.28} &  237782.79 &  136220.49 &  \textbf{46379.06} &  46254.56 &  25939.70 &  \textbf{6529.0} &  6510.0 &  6522.0 &  9.98 &  9.99 &  9.0 \\
469    &  scholar &  10 &     10 &  \textbf{470291.14} &  469702.83 &  265918.98 &  \textbf{46376.21} &  46319.38 &  25939.70 &  \textbf{6529.0} &  6509.0 &  6522.0 &  9.98 &  9.99 &  9.0 \\
\midrule
448    &  movielens &  10 &      1 &  \textbf{10145.64} &  10017.05 &   9736.03 &  \textbf{1546.64} &  1490.05 &  1201.03 &  \textbf{8599.0} &  8527.0 &  8535.0 &   9.5 &   9.50 &   8.55 \\
451    &  movielens &  10 &      5 &  \textbf{16399.05} &  16073.75 &  14540.16 &  \textbf{1572.41} &  1510.15 &  1201.03 &  \textbf{8537.0} &  8523.0 &  8535.0 &   9.5 &   9.50 &   8.55 \\
453    &  movielens &  10 &     10 &  \textbf{24340.6}0 &  23799.07 &  20545.32 &  \textbf{1584.56} &  1527.31 &  1201.03 &  8495.0 &  8526.0 &  \textbf{8535.0} &   9.5 &   9.50 &   8.55 \\
454    &  movielens &  20 &      1 &  \textbf{15429.97} &  15079.33 &  14566.02 &  \textbf{6565.97} &  6254.33 &  5702.02 &  \textbf{8864.0} &  8825.0 &  \textbf{8864.0} &  19.0 &  18.85 &  18.05 \\
456    &  movielens &  20 &      5 &  \textbf{41906.11} &  40759.44 &  37374.10 &  \textbf{6614.82} &  6390.29 &  5702.02 &  8832.0 &  8808.0 &  \textbf{8864.0} &  19.0 &  18.90 &  18.05 \\
458    &  movielens &  20 &     10 &  \textbf{75123.33} &  72992.49 &  65884.19 &  \textbf{6629.73} &  6420.25 &  5702.02 &  8826.0 &  8790.0 &  \textbf{8864.0} &  19.0 &  18.85 &  18.05 \\
\bottomrule
\end{tabular}
\label{tbl:subm}
\end{table*}

\section{Conclusion}\label{sec:conclusion}
We propose a provable algorithm for intra-cluster dispersion and diversification problems,
for which popular heuristics fail to achieve any guarantee.
For the diversification version, the theoretical guarantee is achieved
when the quality function is modular and submodular, 
thus, accommodating a wide range of real-life applications.
Our algorithm selects pairs of elements greedily. 
To achieve a desirable linear-time complexity, 
the algorithm incorporates an approximate diameter computation, 
which also permits additional flexibility in choosing diverse pairs of endpoints.
In the empirical study on both synthetic and real-world datasets, 
our algorithm exhibits a superior ability to maximize dispersion and quality while maintaining balance among clusters.

Potential future directions include extensions to other quality functions and 
other notions of~dispersion.

\bibliographystyle{siam}
\bibliography{references}

\begin{appendices}

\section{Proofs}

\subsection{Proof for Remark \ref{rmk:intra}}

\begin{proof}
We present a tight example:
Consider an instance with $q+1$ clusters.
The first $q$ clusters have two hubs $v_{j1}$ and $v_{j2}$, $j=1,\ldots,q$, 
and other elements, which we denote by $\pr{\cls}_j$.
Distances are $d(v,u)=\epsilon$, for all $v,u\in \pr{\cls}_j$,
$d(v_{j1},v)=d(v_{j2},v)=2$, for all $v\in \pr{\cls}_j$,
and 
$d(v_{j1},v_{j2})=2+\epsilon$, 
where $\epsilon>0$ is infinitesimal.
The last cluster $\cls_{q+1}$ has $4q$ elements,
$2q$ of which form an equidistant clique $K_{2q}$ with distance~2, 
and the other $2q$ form a perfect matching $M_{2q}$, 
where each matching edge is formed by two hubs in one of previous $q$ clusters
(thus, the clusters overlap).
All other pairs of elements in $\cls_{q+1}$ have distance $1+\epsilon$.
Notice that $d$ is indeed a metric.

We consider the \icd problem with $\nsel_j=2q$, for all $j=1,\ldots,q+1$.
An optimal solution will include the 2 hubs for each of the first $q$ clusters, 
and the $K_{2q}$ clique in cluster $\cls_{q+1}$.
Instead, \gp will pick $M_{2q}$ for cluster $\cls_{q+1}$, and no-hub elements in the first $q$ clusters.
As $q$ increases,
the ratio between the two solutions becomes:
\begin{small}
	\begin{align*}
		\frac{\dist(\op)}{\dist(\sel)} 
		& = \frac{\dist(K_{2q})+\sum_{j=1}^{q} \dist({\opj})}{\dist(M_{2q})+\sum_{j=1}^{q} \dist(\sel_j)}\\
		& = \frac{2\,2q(2q-1)/2+ q\,((2+\epsilon)+2\,2(2q-2))}
		{q(2+\epsilon)+(2q(2q-2)/2)(1+\epsilon) + q\, 2q(2q-1)\epsilon/2}\\
		& = \frac{12q-8+\epsilon}{2q+(q+1)(2q-1)\epsilon} \approx 6.
	\end{align*}
\end{small}
\hfill\end{proof}

\subsection{Proof for Lemma \ref{lemma:subm-one-cluster}}

\begin{proof}
When $\nsel$ is even, the proof is similar to that in Theorem \ref{thm:intra-subm}, where we find a maximal matching $\{\ops{\edge}_{\pr{i}}\}$ in the optimal $\op$, and assign each pair of it to a selected pair $\edge_i$.
By a same mapping, we can obtain the following inequality.
\begin{align*}
\phi_{\edge_i}(\setofedge^{i-1}) \ge \phi_{\ops{\edge}_{\pr{i}}}(\setofedge^{i-1}) \ge \phi_{\ops{\edge}_{\pr{i}}}(\setofedge)
\end{align*}
Since each $\edge_i$ is mapped at most twice.
\begin{align*}
2\cdot \sum_{i=1}^{\nsel/2} \phi_{\edge_i}(\setofedge^{i-1}) &\ge \sum_{\pr{i}=1}^{\nsel/2} \phi_{\ops{\edge}_{\pr{i}}}(\setofedge)\\
3\quality(\setofedge) + 4\lambda(\nsel-1)\sum_{i=1}^{\nsel/2} \dist(\edge_{i}) &\ge \quality(\op) + 2\lambda(\nsel-1)\sum_{\pr{i}=1}^{\nsel/2} \dist(\ops{\edge}_{\pr{i}})\\
4\distsubm(\setofedge) &\ge \distsubm(\op)
\end{align*}

Now we go back to handle the case when $\nsel$ is odd.
First of all, in a trivial case when $\nsel=1$, we simply pick one element that has maximum quality and the solution is optimal.
Otherwise, we run the algorithm to select $\nsel+1$ elements, and then we remove the element that contributes the least with respect to a new measure.
Let elements in $\setofedge^{b+1}$ be ordered arbitrarily as $v_1,...,v_{b+1}$, and $\setofedge^{i}$ refers to as the set of first $i$ elements.
A new measure for each element is given as 
\begin{align*}
\fmgnv(v_i) = \quality_{v_i}(\setofedge^{i}\setminus v_i) + \lambda \dist(v_i, \setofedge^{b+1}\setminus v_i)
\end{align*}
It is easy to see that
\begin{align*}
\distsubm(\setofedge^{b+1}) = \sum_{i=1}^{b+1} \fmgnv(v_i)
\end{align*}
Say that the $l$-th element has the least value with respect to $\fmgnv(\cdot)$, which means $\fmgnv(v_{l}) \le \frac{1}{b+1} \distsubm(\setofedge^{b+1})$.
We denote $\setofedge = \setofedge^{b+1} \setminus v_{l}$, and we have,
\begin{align*}
&\frac{b-1}{b+1} \distsubm(\setofedge^{b+1}) \le
- \lambda \dist(v_{l}, \setofedge^{b+1}\setminus v_{l}) +  \sum_{i\in[b+1],i\ne l} \fmgnv(v_i) \\
&= \sum_{i\in[b+1],i\ne l} \quality_{v_i}(\setofedge^{i}\setminus v_i) + \lambda \dist(v_i, \setofedge \setminus v_i) \\
&\le \quality(\setofedge) +  \sum_{i\in[b+1],i\ne l} \lambda \dist(v_i, \setofedge \setminus v_i)
= \distsubm(\setofedge)
\end{align*}
Therefore
\begin{align*}
4\frac{b+1}{b-1} \distsubm(\setofedge) \ge 4\distsubm(\setofedge^{b+1}) \ge \distsubm(\op^{b+1}) \ge \distsubm(\op)
\end{align*}
\hfill\end{proof}

\subsection{Proof for Theorem \ref{thm:intra-subm} with odd $\nsel_j$}
\begin{proof}
Similar to that in Lemma \ref{lemma:subm-one-cluster}, we run the algorithm to select $\pr{\nsel_j}$ elements where $\pr{\nsel_j}=2 \lceil \nsel_j/2 \rceil$, and then we remove an element in each cluster with an odd $\nsel_j$ that contributes the least with respect to a similar measure as in Lemma \ref{lemma:subm-one-cluster}.
Let $\pr{\nselsum}=\sum_{j\in[m]} \pr{\nsel_j}$, and elements in ${{\setofedge}}^{\pr{\nselsum}}$ be ordered arbitrarily as $v_1,...,v_{\pr{\nselsum}}$, and ${{\setofedge}}^{i}$ refers to as the set of first $i$ elements.
The measure for each element is given as 
\begin{align*}
\fmgnv(v_i) = \quality_{v_i}({{\setofedge}}^{i}\setminus v_i) + \lambda \dist(v_i, {\pr{\setofedge}}_{\fedgetoclsnum(v_i)}\setminus v_i)
\end{align*}
Where $\fedgetoclsnum(\cdot)$ maps an selected element to its cluster number. It is easy to see that
\begin{align*}
\distintrasubm({{\setofedge}}^{\pr{\nselsum}}) = \sum_{i=1}^{\pr{\nselsum}} \fmgnv(v_i)
\end{align*}
Now we remove an element $v_{jl}$ in each cluster with an odd $\nsel_j$ that has the least value with respect to $\fmgnv(\cdot)$ within that cluster.
We denote $\setofedge_j = {\pr{\setofedge}}_j \setminus v_{jl}$ and $\setofedge = \sum_{j\in[m]} \setofedge_j$.
By similar derivation as in Lemma \ref{lemma:subm-one-cluster}, we obtain
\begin{align*}
\frac{{\nselmin}-1}{{\nselmin}+1} \distintrasubm({{\setofedge}}^{\pr{\nselsum}}) \le \distintrasubm(\setofedge)
\end{align*}
where ${\nselmin}$ is the smallest odd $\nsel_j$.
Therefore
\begin{align*}
6\frac{{\nselmin}+1}{{\nselmin}-1} \distsubm(\setofedge) \ge 6\distsubm(\setofedge^{\pr{\nselsum}}) \ge \distsubm(\op^{\pr{\nselsum}}) \ge \distsubm(\op)
\end{align*}
if there exists $\nselmin=1$, then the multiplicative factor becomes $\frac{{\nselmin}+1}{{\nselmin}}$ instead of $\frac{{\nselmin}+1}{{\nselmin}-1}$, because in the cluster $\nsel_j=1$, we do not need to care about dispersion as it is zero.
Basically the reason we have a form like $\frac{{\nsel_j}+1}{{\nsel_j}-1}$ is because when we remove one element from $\nsel_j+1$ elements, we need to reduce the objective value by $1/(\nsel_j+1)$ for quality and half dispersion removed, and by another $1/(\nsel_j+1)$ for the other half dispersion. Now the latter part is no longer needed.
\hfill\end{proof}

\subsection{Proof for Lemma \ref{lem:gpa:1cls}}

\begin{proof}
This argument can be proved by induction, 
as in the work of Hassin et al.~\cite{hassin1997approximation}.
The statement is trivially true when $\nsel=1,2$ and also true when $\nsel=3$ by triangular inequality.
For a larger $\nsel$, we set aside the first selected pair $\edge_1$,  and the approximation factor of $\sel\setminus\edge_1$ still holds for $\nsel-2$ in the universal set  $\univset\setminus\edge_1$.
Adding back $\edge_1$, dispersion increases by at least $(\nsel-1)\dist(\pr{\edge})$ where $\pr{\edge}$ is the diameter, while the optimal solution with two arbitrary elements removed increases by at most $(4\nsel-6)\dist(\pr{\edge})$ by adding them back.
\hfill\end{proof}

\subsection{Proof for Theorem \ref{thm:gpa:multi}.} 

Instead of directly proving Theorem \ref{thm:gpa:multi}, we prove an enhanced 10-approximation version, while the proof for Theorem \ref{thm:gpa:multi} is nearly identical.

In the enhanced version, in each iteration for adding a new pair, 
We select a first element $x$ from unsaturated clusters, and then choose a second element $y$ that is furthest from $x$ among all unsaturated clusters $x$ belongs to.
The reason for this is that we can ensure the pair selected is optimal for $x$ among all unsaturated clusters.
Since $x$ may not cover all unsaturated clusters, we will keep selecting another first element from uncovered unsaturated clusters and repeat the same procedure, until every unsaturated cluster is covered.
At the end we choose the best pair among these candidates, each associated with a first element.

\begin{proof}
The proof is similar to that in Theorem \ref{thm:intra}.
If $\bar{\opj}=\emptyset$ for every cluster $j$, then $\dist({\opj})\le 4 \dist(\hat{\setofedge}_j)$, leading to $\dist(\op)\le 4 \dist(\hat{\setofedge})$.

Otherwise, $\dist(\sel_j) \ge \frac 14 \dist({\opj}\setminus\bar{\opj})$ still holds.
We bound the remaining part in the same way.
\[
\dist({\opj}) - \dist({\opj}\setminus\bar{\opj}) \le 2\sum_{v\in\bar{\opj}} \dist(v,{\opj}).
\]
Consider now an element $v\in\bar{\opj}$
and let $\{u,v\}$ and $\{v,w\}$ be two possible selected pairs in $\sel$, the latter selecting $v$ as its first endpoint. Then
\begin{eqnarray*}
\dist(v,{\opj}) & \le & (\nsel_j-1) \, 2\dist(\{u,v\}) \\
\dist(v,{\opj}) & \le & (\nsel_j-1) \, \dist(\{v,w\})
\end{eqnarray*}
That is to say, while each selected pair may be mapped to by at most two elements in $\cup_{j=1}^{\ncls} \bar{\opj}$, it only needs three instead of four copies to handle them.
Summing over all $\bar{\opj}$ we~get
\[
\sum_{j=1}^{\ncls} \sum_{v\in\bar{\opj}} \dist(v,{\opj})  \le
3 \sum_{j=1}^{\ncls} \dist(\sel_j) = 3\, \distintra(\sel),
\]
We can conclude,
\[
10\,\distintra(\sel) \ge
\sum_{j=1}^{\ncls} \dist({\opj}\setminus\bar{\opj}) + 2\sum_{j=1}^{\ncls} \sum_{v\in\bar{\opj}} \dist(v,{\opj}) \ge
\distintra(\op).
\]
\hfill\end{proof}

\subsection{Proof for Lemma \ref{lem:gpa:1cls:subm}}
\begin{proof}
Same as Theorem \ref{thm:intra-subm}, we prove this argument by finding a same mapping between a maximum matching $\{\ops{\edge}\}$ in the optimal solution with selected pairs $\{\hat{\edge}\}$.
With such a mapping, it is easy to see that for each $\hat{\edge}_i$,
\begin{itemize}
	\item If $\hat{\edge}_i \cap \ops{\edge}_{\pr{i}} = \emptyset$ or $\hat{\edge}_i$ is mapped to by only one $\ops{\edge}_{\pr{i}}$, $2  \feva_{\hat{\edge}_i}(\hat{\setofedge}^{i-1}) \ge \feva_{\ops{\edge}_{\pr{i}}}(\hat{\setofedge}^{i-1})$.
	\item If $\hat{\edge}_i$ is mapped to two   $\ops{\edge}_{\pr{i}},\ops{\edge}_{\ppr{i}}$, 
	$3  \feva_{\hat{\edge}_i}(\hat{\setofedge}^{i-1}) \ge \feva_{\ops{\edge}_{\pr{i}}}(\hat{\setofedge}^{i-1}) + \feva_{\ops{\edge}_{\ppr{i}}}(\hat{\setofedge}^{i-1})$. 
	This holds because one of $\ops{\edge}_{\pr{i}},\ops{\edge}_{\ppr{i}}$ must contain the first endpoint of $\hat{\edge}_i$ and have a value less than $\feva_{\hat{\edge}_i}(\hat{\setofedge}^{i-1})$.
\end{itemize}
Therefore we have
\begin{align*}
&3  \sum_{i=1}^{\nsel/2} \phi_{\hat{\edge}_i}(\hat{\setofedge}^{i-1})
\ge \sum_{\pr{i}=1}^{\nsel/2} \phi_{\ops{\edge}_{\pr{i}}}(\hat{\setofedge})\\
&= \sum_{\pr{i}=1}^{\nsel/2} Q(\hat{\setofedge}\cup\ops{\edge}_{\pr{i}}) - Q(\hat{\setofedge}) + \lambda 2(\nsel-1)\dist(\ops{\edge}_{\pr{i}})\\
&\ge Q(\hat{\setofedge}\cup\op) - Q(\hat{\setofedge}) + \sum_{\pr{i}=1}^{\nsel/2} \lambda 2(\nsel-1)\dist(\ops{\edge}_{\pr{i}})\\
&\ge \distintrasubm(\op) - Q(\hat{\setofedge})
\end{align*}
Finally,
\begin{align*}
3Q(\hat{\setofedge}) + \lambda \sum_{i=1}^{\nsel/2} 6 (\nsel-1) \dist(\hat{\edge}_{i}) &\ge \distintrasubm(\op) - Q(\hat{\setofedge})\\
6  \distintrasubm(\hat{\setofedge}) &\ge \distintrasubm(\op)
\end{align*}
\hfill\end{proof}

\subsection{Proof for Theorem \ref{thm:gpa:multi:subm}}
Same as Theorem \ref{thm:gpa:multi}, there exists an enhanced version.
The enhanced selection scheme is the same as its modular counterpart, but with a different objective.

\begin{proof}
The proof is a combination of Theorem \ref{thm:gpa:multi} and Lemma \ref{lem:gpa:1cls:subm}.
Following the same logic, we start from 
\begin{align*}
5 \sum_{i=1}^{\nselsum/2} \phi_{\hat{\edge_i}}(\hat{\setofedge}^{i-1})
\ge \sum_{j=1}^{\ncls}\sum_{\pr{i}=1}^{\nsel_j/2} \phi_{\ops{\edge}_{j\pr{i}}}(\hat{\setofedge})
\end{align*}
that eventually leads to 
\begin{align*}
10  \distintrasubm(\hat{\setofedge}) \ge \distintrasubm(\op)
\end{align*}
\hfill\end{proof}

\section{A general partition matroid constraint}
\label{subsec:partition}

We mention in Section \ref{sec:def} that our methods generalize to a general partition matroid constraint 
by removing a whole set of the partition when we select a single element from it.
Now we elaborate on this.

Obviously, this operation will give a valid solution, 
since we are not allowed to take more than one element from a partition.
As for theoretical guarantees, a central idea among the proofs for our algorithm 
is finding a mapping between pairs $\{\ops{\edge}\}$ in the optimal solution and pairs $\{\edge\}$ 
selected by our algorithm such that each $\edge$ is better than the $\ops{\edge}$ 
mapped to it due to the greedy nature.
The same logic applies to a general partition matroid constraint 
except for the definition of intersection between $\ops{\edge}$ and $\edge$.
Under a general constraint, a new definition is needed; a pair $\ops{\edge}$ 
is said to intersect with $\edge$ as long as one of their elements falls into a common partition.
Therefore, all of our analysis generalize smoothly to a general partition matroid constraint.


\begin{table*}[t]
	\footnotesize
	\centering
	\caption{Sensitivity of \gpa to the hyperparameter $\param$}
	\begin{tabular}{HlrrHrrrrr}
		\toprule
		{} & Dataset &      n &    b &  dim &  $\param$=0.1 &  $\param$=0.3 &  $\param$=0.5 &  $\param$=0.7 &  $\param$=0.95 \\
		exp\_id &         &        &      &    &            &            &            &            &             \\
		\midrule
		427    &  random &   1000 &   10 &  2 &      0.996 &      0.994 &      0.994 &      0.995 &       1.0 \\
		428    &  random &   1000 &  100 &  2 &      0.997 &      0.973 &      0.962 &      0.962 &       1.0 \\
		429    &  random &  10000 &   10 &  2 &      0.994 &      0.991 &      0.997 &      0.997 &       1.0 \\
		430    &  random &  10000 &  100 &  2 &      0.998 &      0.998 &      1.0 &      0.998 &       0.999 \\
		\midrule
		423    &   proto &   1000 &   10 &  2 &      0.992 &      0.995 &      0.998 &      0.998 &       1.0 \\
		424    &   proto &   1000 &  100 &  2 &      0.948 &      0.998 &      0.994 &      0.987 &       1.0 \\
		425    &   proto &  10000 &   10 &  2 &      0.990 &      0.996 &      0.993 &      1.0 &       0.996 \\
		426    &   proto &  10000 &  100 &  2 &      0.999 &      0.999 &      0.999 &      1.0 &       1.0 \\
		\bottomrule
	\end{tabular}
	\label{tbl:param}
\end{table*}

\begin{table*}[t]
	\footnotesize
	\centering
	\caption{Performance for large synthetic datasets ($n=100000$, 10 clusters) where $dim$ stands for dimension.}
	\begin{tabular}{HlHrrrrrrrrr}
		\toprule
		{} & Dataset &       n &    b &   dim &  GPmin &  GPavg &  GPmax &  GVmin &  GVavg &  GVmax &     RN \\
		exp\_id &         &         &      &     &        &        &        &        &        &        &        \\
		\midrule
		481    &  random &  100000 &   10 &   2 &  0.980 &  0.992 &    1.0 &  0.983 &  0.987 &  0.993 &  0.576 \\
		482    &  random &  100000 &   10 &  10 &  0.987 &  0.995 &    1.0 &  0.985 &  0.989 &  0.994 &  0.678 \\
		483    &  random &  100000 &  100 &   2 &  0.998 &  0.999 &    1.0 &  0.997 &  0.998 &  0.998 &  0.629 \\
		497    &  random &  100000 &  100 &  10 &  0.999 &  1.000 &    1.0 &  0.998 &  0.998 &  0.998 &  0.745 \\
		\midrule
		477    &   proto &  100000 &   10 &   2 &  0.985 &  0.993 &    1.0 &  0.986 &  0.988 &  0.990 &  0.170 \\
		478    &   proto &  100000 &   10 &  10 &  0.989 &  0.994 &    1.0 &  0.985 &  0.990 &  0.996 &  0.549 \\
		479    &   proto &  100000 &  100 &   2 &  0.999 &  0.999 &    1.0 &  0.995 &  0.997 &  0.997 &  0.225 \\
		484    &   proto &  100000 &  100 &  10 &  0.999 &  0.999 &    1.0 &  0.997 &  0.998 &  0.998 &  0.600 \\
		\bottomrule
	\end{tabular}
	\label{tbl:syn}
\end{table*}

\begin{figure} 
	\centering
	\includegraphics[width=.45\textwidth]{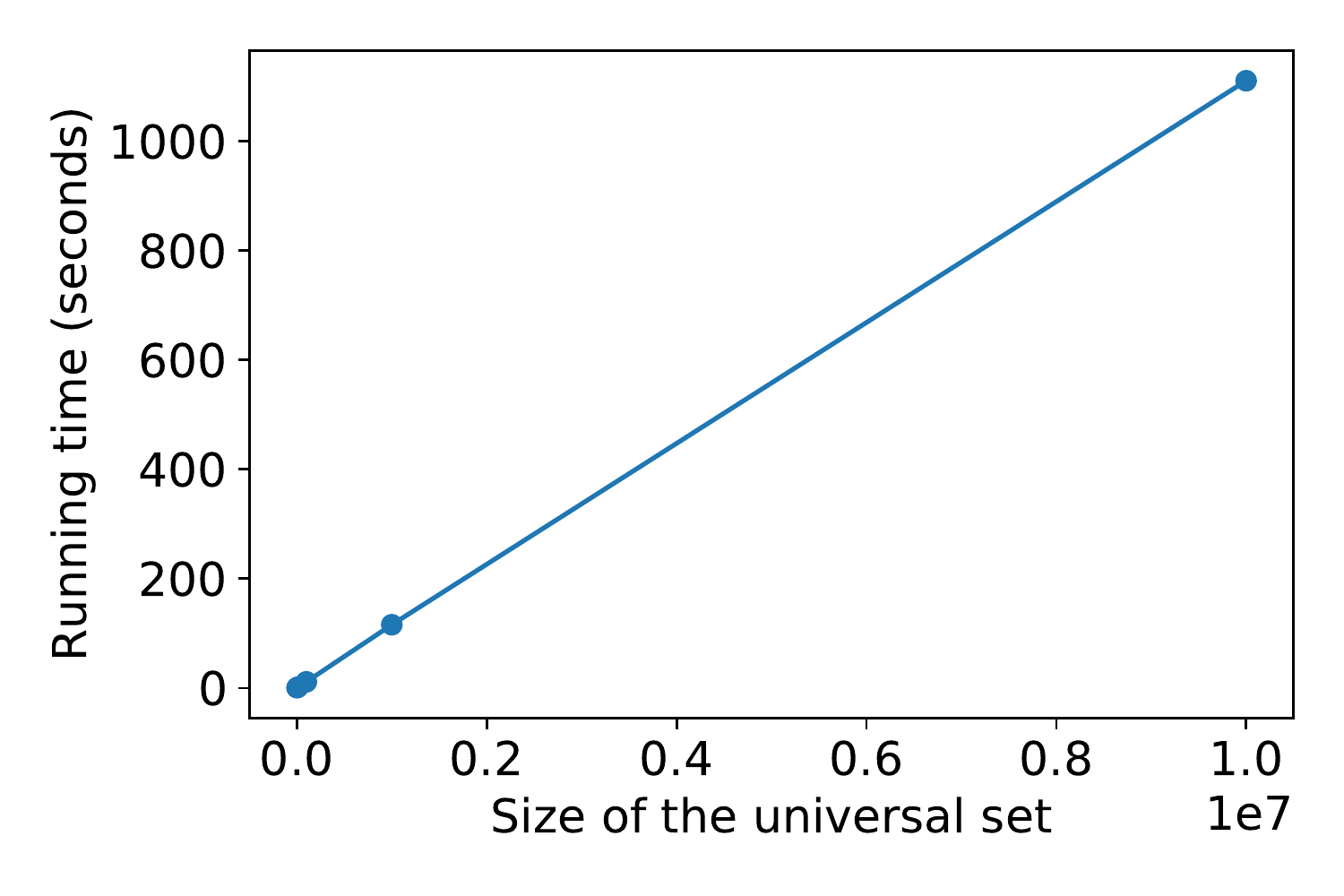} 
	\caption{Linearly scalability of the \gpa algorithm on the random dataset with 10 clusters, a fixed $\nsel$ and $\param=0.95$.} \label{fig:time}
\end{figure}

\begin{table*}[t]
	\footnotesize
	\centering
	\caption{Performance on documents aggregations, where \textit{th} stands for the topic probability threshold.}
	\begin{tabular}{Hlrrrrrrrrrr}
		\toprule
		{} & Dataset &   b &  m &    th &  GPmin &  GPavg &  GPmax &  GVmin &  GVavg &  GVmax &     RN \\
		exp\_id &         &     &       &       &        &        &        &        &        &        &        \\
		\midrule
		485    &  20news &  10 &     5 &  0.15 &  0.970 &  0.985 &    1.0 &  0.967 &  0.977 &  0.990 &  0.543 \\
		486    &  20news &  10 &     5 &  0.25 &  0.987 &  0.996 &    1.0 &  0.976 &  0.981 &  0.991 &  0.525 \\
		487    &  20news &  30 &     5 &  0.15 &  0.988 &  0.993 &    1.0 &  0.986 &  0.989 &  0.991 &  0.574 \\
		488    &  20news &  30 &     5 &  0.25 &  0.997 &  0.998 &    1.0 &  0.991 &  0.995 &  0.998 &  0.559 \\
		\midrule
		489    &    blog &  10 &     8 &  0.15 &  0.997 &  0.999 &    1.0 &  0.988 &  0.992 &  0.995 &  0.442 \\
		490    &    blog &  10 &     8 &  0.25 &  0.988 &  0.995 &    1.0 &  0.988 &  0.991 &  0.997 &  0.433 \\
		491    &    blog &  30 &     8 &  0.15 &  0.998 &  0.999 &    1.0 &  0.996 &  0.997 &  0.998 &  0.465 \\
		496    &    blog &  30 &     8 &  0.25 &  0.999 &  0.999 &    1.0 &  0.995 &  0.997 &  0.998 &  0.440 \\
		\bottomrule
	\end{tabular}
	\label{tbl:doc}
\end{table*}

\begin{table}[t]
\footnotesize
\centering
\caption{A use case for descriptors of seven data mining communities.}
\begin{tabular}{ll}
\toprule
C1 
&  database managementsystem r \\
&  previous paper \\
&  xml data model\\
&  shared data\\
&  relational model\\
&  data base system\\ 
\midrule
C2
&  data compression\\
&  big data\\
&  bandwidth\\
&  efficient data restructuring\\
&  aggregating data access\\
&  data point\\ 
\midrule
C3
&  data store\\
&  data processing\\
&  data instance\\
&  health data\\
&  large system\\
&  database machine\\ 
\midrule
C4 
&  semantic data\\
&  content management middleware\\
&  spatial data\\
&  census data\\
&  information granularity\\
&  associates data output\\ 
\midrule
C5 
&  2-dimensional data visualization\\
&  archival data\\
&  high dimensional data\\
&  complex scientific data\\
&  italian workshop\\
&  data distribution\\ 
\midrule
C6
&  data element\\
&  sequence data\\
&  structured data\\
&  data node\\
&  data cube\\
&  game engine\\ 
\midrule
C7
&  feature model\\
&  data warehousing\\
&  complex object\\
&  temporal database management system\\
&  abstract data type\\
&  historical database\\   
\bottomrule
\end{tabular}
\label{tbl:usecase}
\end{table}

\end{appendices}
\flushend

\end{document}